\newtheorem{theorem}{Theorem}
\newtheorem{definition}[theorem]{Definition}
\newtheorem{proposition}[theorem]{Proposition}
\newtheorem{remark}[theorem]{Remark}
\newenvironment{proof}[1][Proof]{\noindent\textbf{#1.} }{\ \rule{0.5em}{0.5em}}
\def\@email#1#2{%
 \endgroup
 \patchcmd{\titleblock@produce}
  {\frontmatter@RRAPformat}
  {\frontmatter@RRAPformat{\produce@RRAP{*#1\href{mailto:#2}{#2}}}\frontmatter@RRAPformat}
  {}{}
}%
\begin{document}

\title{Reproducing Kernel Hilbert Space Approach to Non-Markovian Quantum Stochastic Models}

\author{John E.~Gough} \email{jug@aber.ac.uk}
   \affiliation{Aberystwyth University, SY23 3BZ, Wales, United Kingdom}

\author{Haijin Ding} \email{haijin.ding@centralesupelec.fr}
    \affiliation{Laboratoire des Signaux et Syst\'{e}mes (L2S), CNRS-CentraleSup\'{e}lec-Universit\'{e} Paris-Sud, Universit\'{e} Paris-Saclay, 3, Rue Joliot Curie, 91190, Gif-sur-Yvette, France.}
	
\author{Nina H. Amini} \email{nina.amini@l2s.centralesupelec.fr}
   \affiliation{Laboratoire des Signaux et Syst\'{e}mes (L2S), CNRS-CentraleSup\'{e}lec-Universit\'{e} Paris-Sud, Universit\'{e} Paris-Saclay, 3, Rue Joliot Curie, 91190, Gif-sur-Yvette, France.}

\begin{abstract}
We give a derivation of the non-Markovian quantum state diffusion equation of Di{\'o}si and Strunz starting from a model of a quantum mechanical system coupled to a bosonic bath. We show that the complex trajectories arises as a consequence of using the Bargmann-Segal (complex wave) representation of the bath. In particular, we construct a reproducing kernel Hilbert space for the bath auto-correlation and realize the space of complex trajectories as a Hilbert subspace. The reproducing kernel naturally arises from a feature space where the underlying feature space is the one-particle Hilbert space of the bath quanta. We exploit this to derive the unravelling of the open quantum system dynamics and show equivalence to the equation of Di{\'o}si and Strunz. We also give an explicit expression for the reduced dynamics of a two-level system coupled to the bath via a Jaynes-Cummings interaction and show that this does indeed correspond to an exact solution of the Di{\'o}si-Strunz equation. Finally, we discuss the physical interpretation of the complex trajectories and show that they are intrinsically unobservable.
\end{abstract}

\maketitle

\section{Introduction}
An ensemble of quantum states of a system $S$ is mathematically described as follows: the state of the system at time $t$ is given by $| \Psi_t (\xi )\rangle_S$ where $\xi$ is information describing external influence and where $\xi$ belongs to some collection $\mathcal{E}$ which is a measurable space with a probability measure $\mathbb{P}$. Given an observable $\hat X$, its ensemble average at time $t$ should be
\begin{eqnarray}
\langle \hat X \rangle_t = \int_{\mathcal{E}} \langle \Psi_t (\xi ) | \hat X | \Psi_t (\xi ) \rangle_S \, \mathbb{P} [ d\xi ] .
\end{eqnarray}
It may be that $\langle \hat X \rangle_t $ computed this way yields the correct physical average value of every observable $\hat X$ of the system at any time $t$, in which case we say that the ensemble gives an unraveling of the systems dynamics. However, two points should be noted: firstly, the ensemble is here a \textit{classical} randomization of an otherwise quantum dynamics; secondly, the unraveling only tells us the one-time average. We have not obliged the unravelling to provide the correct multi-dimensional correlations between several observables at different times: only the one-time averages. Kolmogorov's results on encapsulating stochastic processes tell us that unravelings address only a very specific modeling aspect of the noise.

The rationale falls into three (potentially overlapping) camps:
\begin{enumerate}
\item (Quantum State Diffusion)\cite{Pearle,Gisin,GRW,Diosi,GPR} $\xi$ represents some universal external influence which leads to a non-unitary dynamics for the ensemble and may be used to explain the collapse of the wave-function;
\item (Quantum Filtering)\cite{Belavkin,BB,BvHJ}
 $\xi$ is information obtained through indirect (weak) measurement of the system and $\langle \hat X \rangle_t $ is the state of the system at time $t$ conditioned on the information at that time;
\item (Quantum Monte Carlo)\cite{DCM,Carmichael} the ensemble is just a handy device for computing the average behaviour of an open system, with $\xi$ being one simulation out of many which we repeat to get a good estimate.
\end{enumerate}

In each of the three applications above, $\xi$ will typically be a sample path and $\mathcal{E}$ some space of admissible trajectories which we average over. The sample path structure may lead to a filtration with $\xi \mapsto | \Psi_t (\xi ) \rangle_S $ being an adapted process. 
In the quantum filtering approach, the $\xi$ give the continuous-in-time measurement readout from indirect measurements and parallels traditional stochastic filtering methodology - this is often the motivation in the quantum monte carlo approach, even if not made explicit. The 
quantum state diffusion is often more speculative and vague when it comes to the origins of the external noise. See the well-known book of Joos \textit{et al.} on Decoherence for more details\cite{Decoherence}.

Our interest in this paper will be the non-markovian quantum stochastic dynamics introduced by Di\'{o}si and Strunz \cite{DS97} (see also their later work with Gisin\cite{SDG})
\begin{eqnarray}
\frac{d}{d t} | \Psi_t (\zeta ) \rangle_S  =  \bigg( -i \hat{H} +
 \hat{L}  \zeta_t - \hat{L}^\ast \int_0^t d\tau \,K\left( t,\tau \right) \frac{\delta }{\delta \zeta _\tau } \bigg)  | \Psi _t(\zeta ) \rangle_S ,
\label{eq:nMSSE}
\end{eqnarray}
where $\zeta$ is a complex-valued trajectory. Here $K(t,s)$ is the auto-correlation function of the bath $B$ which acts as environment for $S$. 

Complex trajectories originally appeared in the (markovian) quantum state diffusion approach where they were motivated by the internal symmetries of the Lindblad generator: here the $\zeta$ appear as the sample paths of a complex Wiener process $Z(t) = Z^\prime (t) +i Z^{\prime \prime} (t)$ where $Z^\prime$ and $Z^{\prime \prime}$ are independent standard Wiener processes. However, they are atypical of the other approaches where each measurement process is a real-valued process. However, they are compatible with quantum filtering\cite{PUD2017,JG2017}.

In the paper, we show that the equation (\ref{eq:nMSSE}) can be derived from a microscopic model. The origin of the complex trajectories ultimately comes from the fact that we are using the Bargmann-Segal (complex-wave) representation for the bath. Here our approach follows the derivation given in Di\'{o}si and Strunz \cite{DS97}, but we deal with technical mathematical issues relating to infinite baths - we also explain the origin of the causal structure of (\ref{eq:nMSSE}): the kernel $K(t,s)$ is not causal and this is not built in to the Bargmann-Segal representation, but instead comes from the form of the dynamical equations we wish model. In setting up a rigorous description, we make use of the fact that the kernel leads to a reproducing kernel Hilbert space (RKHS) and we show that space of complex trajectories is a Hilbert subspace of this RKHS. In particular, the mircroscopic model naturally supplies a feature map for the kernel.

As an application, in Section \ref{sec:Exact} we construct an explicit unravelling for the problem of a two-level atom coupled to the bath via Jaynes-Cummings interaction, and then show that the state $| \Psi_t (\zeta ) \rangle_S$ does indeed satisfy (\ref{eq:nMSSE}).

In section \ref{sec:observe}, we show that the complex trajectories have no physical interpretation as observable processes. They are entirely a mathematical by-product of the Bargmann-Segal representation.

Care is needed, therefore, in the interpretation. For instance, the equation (\ref{eq:nMSSE}) is often compared with the quantum Zakai equation\cite{SDG}. Both of these are indeed linear stochastic equations for an un-normalized vector state, however, only the latter has an interpretation as the conditional evolution of a continuously monitored system. 

%%%%%%%%%%%%%%%%%%%%%%%%%%%%%%%%%%%%%%%%%%%%%%%%%%%%%%%%%%%%%%%%%%%%%
%%%%%%%%%%%%%%%%%%%%%%%%%%%%%%%%%%%%%%%%%%%%%%%%%%%%%%%%%%%%%%%%%%%%%
%%%%%%%%%%%%%%%%%%%%%%%%%%%%%%%%%%%%%%%%%%%%%%%%%%%%%%%%%%%%%%%%%%%%%
%%%%%%%%%%%%%%%%%%%%%%%%%%%%%%%%%%%%%%%%%%%%%%%%%%%%%%%%%%%%%%%%%%%%%
%%%%%%%%%%%%%%%%%%%%%%%%%%%%%%%%%%%%%%%%%%%%%%%%%%%%%%%%%%%%%%%%%%%%%
\section{The Microscopic Model}
We consider a system with Hilbert space $\mathfrak{h}_S$ coupled to a bosonic bath with Hilbert space $\mathfrak{H}_B = \Gamma (\mathfrak{f}_B)$, that is, the (Bose) Fock space over a fixed one-particle space $\mathfrak{f}_B$. 

\subsection{The Bath Fock space}
Recall that $\mathfrak{H}_B = \Gamma (\mathfrak{f}_B)$ will be the direct sum of the symmetrized Hilbert spaces $\otimes_{\mathrm{symm.}}^n \mathfrak{f}_B$ for $n=0,1,2, \cdots$, with the $n=0$ space being spanned by a single vector $|\mathrm{vac} \rangle_B$ which is the Fock vacuum.
(See, for instance, Chapter 10 of Gough and Kupsch \cite{GK} for details.)

The creation and annihilation operators for test-function $g\in \mathfrak{f}_B$ are denoted as $\hat{a}(g)^\ast, \hat{a}(g)$ respectively, and the differential second-quantization of an observable $\hat{M}$ by $d\Gamma (\hat{M})$. 

\bigskip

For example, a bath consisting of a discrete set of oscillators with non-degenerate frequencies $\Omega$ is represented by taking $\mathfrak{f}_B = \mathbb{C}^\Omega$ with inner-product $\langle f | g\rangle = \sum_{\omega \in \Omega } f (\omega )^\ast g (\omega )$ and
\begin{eqnarray}
\hat{a}(g)^\ast = \sum_{\omega \in \Omega} g (\omega ) a_\omega^\ast, 
\quad \hat{a}(g) = \sum_{\omega \in \Omega} g (\omega )^\ast a_\omega,  \quad
d\Gamma ( \hat{M}) = \sum_{\omega , \omega' \in \Omega} M_{\omega, \omega'} a^\ast_\omega a_{\omega'} ,
\end{eqnarray}
where $[\hat{a}_\omega , \hat{a}_{\omega'}^\ast ] = \delta_{\omega , \omega'}$ and $\hat{M} $ is a matrix with entries $M_{\omega , \omega '}$.
The bath Hamiltonian would then be $\hat{H}_B = d \Gamma (\hat{h}_B ) = \sum_\omega \omega \hat{a}^\ast_\omega \hat{a}_\omega$, with $\hat{h}_B$ being the diagonal matrix with entries $\omega \in \Omega$.

For a continuum of oscillators, we could take $\mathfrak{f}_B = L^2( \mathbb{R}_+ , d \omega )$ with $\hat{a}(g)^\ast = \int_0^\infty g (\omega ) \hat{a}_\omega^\ast\, d\omega , \, \hat{a}(g) = \int_0^\infty g(\omega)^\ast \hat{a}_\omega \,  d\omega$ where now $[ \hat{a}_\omega , \hat{a}_{\omega'}^\ast ] = \delta (\omega - \omega' )$.

\subsection{Bargmann-Segal (Complex Wave) Representation}
For a given vector $| f\rangle \in \mathfrak{f}_B$, its \textit{exponential vector} is defined as
\begin{eqnarray}
| e^f \rangle_B = e^{ \hat{a}(f) } | \mathrm{vac} \rangle_B \equiv
| \mathrm{vac} \rangle \oplus | f \rangle \oplus \big( \frac{1}{\sqrt{2!}} | f \rangle \otimes | f \rangle \big) \oplus \cdots .
\end{eqnarray}
In particular the vacuum state corresponds to $ | \mathrm{vac} \rangle_B \equiv| e^f \rangle_B $. The exponential vectors form a total subset in $\mathfrak{H}_B$ and we note that
\begin{eqnarray}
\langle e^f | e^ g \rangle_B = e^{ \langle f | g \rangle } .
\end{eqnarray}

The Complex Wave (or Bargmann-Segal) representation of a vector $|\Psi \rangle \in \mathfrak{H}_B$ is in terms of the function $\tilde{\Psi}$ given by
\begin{eqnarray}
\tilde{\Psi}\left( f \right) =\langle e^f \,|\Psi
\rangle .
\end{eqnarray}
The mapping $f \mapsto \langle e^f |\Psi \rangle $ is naturally anti-holomorphic.  

\begin{definition}
The pre-measure $\tilde{\mathbb{P}}$ on the one-particle space $\mathfrak{f}_B$ by
\begin{eqnarray}
\int_{\mathfrak{f}_B} e^{ \langle g_1 | f \rangle}  \, e^{ \langle f | g_2 \rangle} \, \tilde{\mathbb{P}}
\left[ d f \right] = e^{ \langle g_1 | g_2 \rangle} .
\label{eq:inner}
\end{eqnarray}
\end{definition}

We remark that for a finite assembly of oscillators the one-particle Hilbert space is $\mathfrak{f}_B \cong \mathbb{C}^\Omega$ which can be considered as the measurable space $\mathbb{R}^{2 \Omega}$. In this case, $\mathbb{P}$ is a probability measure and is given explicitly by
\begin{eqnarray}
\tilde{\mathbb{P}}\left[ df \right] =\prod_{\omega }\left(
\frac{1}{\pi} e^{-|f  (\omega ) |^{2}}df (\omega )^{\prime } df (\omega )^{\prime \prime } \right) 
\end{eqnarray}
where we write each variable $f (\omega )$ in terms of its real and imaginary parts as $ f (\omega )= f (\omega )^{\prime }+i f ( \omega )^{\prime \prime }$. 

As a technical aside, $\tilde{\mathbb{P}}$ will define a Gaussian measure only for the finite dimensional case. In the infinite-dimensional case, we should extend $\tilde{\mathbb{P}}$ to a $\sigma$-additive measure over a larger space $\mathfrak{f}_B^>$. As before, for each $| \Psi \rangle \in \mathfrak{H}_B$ we arrive at functions $: f \mapsto \tilde{\Psi} (f  )$ this time defined for $\alpha \in \mathfrak{f}_B^>$. These functions are properly square-integrable with respect to the measure (which we again denote as $\tilde{\mathbb{P}}$), however, one a dense subset of these functions will now be continuous and anti-holomorphic. See Gough and Kupsch \cite{GK} for details.

In general, we have 
\begin{eqnarray}
\langle \Phi |\Psi \rangle_B =\int_{\mathfrak{f}_B } \tilde{\Phi}(f) \tilde{\Psi} (f)\, \tilde{\mathbb{P}}\left[ df \right] ,
\end{eqnarray}
with (\ref{eq:inner}) being the special case $\Phi = e^{g_1}$ and $\Psi = e^{g_2}$. We also have the resolution of identity
\begin{eqnarray}
\int_{\mathfrak{f}_B }|e^f \rangle \langle e^f | \,\, \tilde{\mathbb{P}} \left[ df \right] = \hat{I}_B .
\end{eqnarray}

\subsection{The System-Bath Interaction}
\label{sec:SB}

The total Hamiltonian for our system and bath will be of the form
\begin{eqnarray}
\hat{H}_{\mathrm{Tot.}} = \hat{H} \otimes \hat{I}_B + \hat{I}_S \otimes \hat{H}_B +i\hat{L} \otimes \hat{a}(g)^\ast -i\hat{L}^\ast \otimes \hat{a}(g) .
\label{eq:Ham_SB}
\end{eqnarray}
where $\hat{H}=\hat{H}^\ast, \hat{L}$ are bounded operators of the system, $| g\rangle  \in \mathfrak{f}_B$ and $\hat{H}_B = d \Gamma (\hat{h}_B)$ for some self-adjoint operator $\hat{h}_B$ on $\mathfrak{f}_B$. We note that the annihilation operators evolve under the bath Hamiltonian according to
\begin{eqnarray}
e^{it\hat{H}_B} \hat{a}(g) e^{-it\hat{H}_B} \equiv \hat{a} (g_t), \, \text{where} \, |g_t \rangle = e^{it\hat{h}_B} | g \rangle  .
\end{eqnarray}
The bath correlation function is then given by
\begin{eqnarray}
K(t,s) \triangleq  \langle g_t | g_s \rangle = \langle g | e^{-i (t-s) \hat{h}_B} g \rangle .
\label{eq:K}
\end{eqnarray}

We note that the kernel is sesquilinear, i.e.,  $K\left( t,s\right) ^{\ast }=K\left( s,t\right) $, and defines a positive definite kernel. We note that kernels of the form (\ref{eq:K}) cannot be both non-trivial and \textit{causal}. Causality is the condition that $K(t , \tau) \equiv 0$ for $ \tau <t$, but by the sesquilinear symmetry we would also have $K(t , \tau ) =0$ for $ t < \tau$ too. The only nontrivial and causal model is the memoriless kernel $K(t,s) = k_0 \, \delta (t-s)$.

\begin{definition}
The vector $g\in \mathfrak{f}_B$ is said to be faithful if the measure $\mu_g$, with Fourier transform $\int_{-\infty}^{\infty} e^{i\omega t}\,
\mu_g [d \omega] = \langle g |e^{it\hat{h}_B} | g \rangle$, has support equal to the spectrum of $\hat{h}_B$.
\end{definition}
This basically means that all the bath modes enter in the interaction. In general, $\mathfrak{h}_B$ mat be written as the direct sum of a maximal Hilbert space for which $g$ is faithful (the modes that couple to the system) and its orthogonal complement. By ignoring this complement we can always assume our setup is faithful.

\begin{remark}
For instance, the kernel for a discrete assembly of oscillators will be $K(t,s) = \sum_{\omega \in \Omega } |g (\omega ) |^2 e^{-i(t-s) \omega}$ and faithfulness would just mean that $g(\omega ) \neq 0$ for all $\omega\in \Omega$.
\end{remark}

%%%%%%%%%%%%%%%%%%%%%%%%%%%%%%%%%%%%%%%%%%%%%%%%%%%%%%%%%%%%%%%%%%%%%%%%%%%%%%%%%%%%%%%%%%%%%%%%%%%%%%%%%%%%%%%%%%%
\subsection{The Open System Dynamics}

It is convenient to move to the interaction picture using the unitary family $U_t = e^{+it H_0} e^{-itH_{\mathrm{Tot.}}}$ where $\hat{H}_0 = \hat{I}_S \otimes \hat{H}_B$ and here we find the differential equation
\begin{eqnarray}
\frac{d}{dt}  \hat{U} _t = - i \hat{\Upsilon}_t \, \hat{U}_t, \qquad -i\hat{\Upsilon }_t = -i\hat{H}\otimes \hat{I }_B + \hat{L} \otimes \hat{Z}(t) -\hat{L}^\ast \otimes \hat{Z}(t)^\ast ,
\end{eqnarray}
where $\hat{Z}(t) = \hat{a}(g_t)^\ast$. It is important to note that $\hat{Z} (t)$ is not Hermitean and therefore not an observable! Indeed, the canonical commutation relations imply that
\begin{eqnarray}
[\hat{Z}(t)^\ast , \hat{Z}(s) ] = K(t,s) \, \hat{I}_B.
\label{eq:Z_CR}
\end{eqnarray}
(The reason for taking $\hat{Z}(t)$ to be the creator rather than the annihilator is that we will be shortly using the anti-holomorphic Bargmann representation.) We note the eigen-relation
\begin{eqnarray}
\bigg( \hat{Z}^\ast_t - \langle g_t | f \rangle\bigg)  \, |e^f \rangle _B = 0.
\label{eq:eigen}
\end{eqnarray}

The two-point vacuum expectations are
\begin{gather}
\langle \mathrm{vac} | \hat{Z}_t^\ast \hat{Z}_s | \mathrm{vac} \rangle = K\left( t,s\right) , \quad
\langle \mathrm{vac} | \hat{Z}_t \hat{Z}_s | \mathrm{vac} \rangle = \langle \mathrm{vac} | \hat{Z}_t^\ast \hat{Z}_s^\ast | \mathrm{vac} \rangle =\langle \mathrm{vac} | \hat{Z}_t \hat{Z}_s^\ast | \mathrm{vac} \rangle =0.
\label{eq:two-point_vac}
\end{gather}

It is convenient to introduce the following operators
\begin{eqnarray}
j_t (\hat{X}) &=& \hat{U}_t^\ast ( \hat{X} \otimes \hat{I}_B) \hat{U}_t ,\\
\hat{Z}_{\mathrm{in}} (t) &=& \hat{I}_S \otimes \hat{Z}(t) , \\
\hat{Z}_{\mathrm{out}} (t) &=& \hat{U}_t^\ast \hat{Z}_{\mathrm{in}} (t) \hat{U}_t.
\end{eqnarray}

\begin{proposition}[Equal time commutation relations]
For each system operator $\hat{X}$, the operator $j_t (\hat{X})$ commutes with both $\hat{Z}_{\mathrm{out}} (t)$ and $\hat{Z}_{\mathrm{out}} (t)^\ast$. 
\end{proposition}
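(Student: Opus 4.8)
The plan is to use the single most basic structural fact available: both $j_t$ and $\hat{Z}_{\mathrm{out}}(t)$ are obtained by conjugating, respectively, a pure system operator and a pure bath operator by the \emph{same} unitary $\hat{U}_t$, and conjugation by a unitary is a $*$-algebra homomorphism that preserves commutators. So the whole proposition should reduce to the triviality that $\hat{X}\otimes\hat{I}_B$ commutes with $\hat{I}_S\otimes\hat{Z}(t)$.

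Concretely, I would first record that $\hat{U}_t$ is unitary: either directly from $\hat{U}_t = e^{it\hat{H}_0}e^{-it\hat{H}_{\mathrm{Tot.}}}$ with $\hat{H}_0,\hat{H}_{\mathrm{Tot.}}$ self-adjoint, or by checking $\hat{\Upsilon}_t^\ast=\hat{\Upsilon}_t$, since $\bigl(i\hat{L}\otimes\hat{Z}(t)-i\hat{L}^\ast\otimes\hat{Z}(t)^\ast\bigr)^\ast = -i\hat{L}^\ast\otimes\hat{Z}(t)^\ast+i\hat{L}\otimes\hat{Z}(t)$ and $\hat{H}\otimes\hat{I}_B$ is self-adjoint. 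Consequently
\[
\bigl[\,j_t(\hat{X}),\,\hat{Z}_{\mathrm{out}}(t)\,\bigr]
= \hat{U}_t^\ast\,\bigl[\,\hat{X}\otimes\hat{I}_B,\ \hat{I}_S\otimes\hat{Z}(t)\,\bigr]\,\hat{U}_t,
\]
and identically with $\hat{Z}(t)$ replaced by $\hat{Z}(t)^\ast$. Since $\hat{X}\otimes\hat{I}_B$ and $\hat{I}_S\otimes\hat{Z}(t)$ (resp. $\hat{I}_S\otimes\hat{Z}(t)^\ast$) act on different tensor factors of $\mathfrak{h}_S\otimes\mathfrak{H}_B$, the bracket on the right vanishes, and conjugating zero by $\hat{U}_t$ gives zero. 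That is the proof.

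I do not expect any genuine obstacle here; the only care needed is bookkeeping for the unboundedness of $\hat{Z}(t)=\hat{a}(g_t)^\ast$, so that the commutator identities above should be read on a common invariant core (finite-particle vectors, or the linear span of $|\phi\rangle\otimes|e^f\rangle$). It is also worth stressing in the write-up that this is intrinsically an \emph{equal-time} statement: the cancellation relies on $j_t$ and $\hat{Z}_{\mathrm{out}}(t)$ being implemented by one and the same propagator $\hat{U}_t$, and the argument as it stands says nothing about $[\,j_t(\hat{X}),\hat{Z}_{\mathrm{out}}(s)\,]$ for $s\neq t$, which would require an additional adaptedness/causality input.
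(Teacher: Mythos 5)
Your argument is exactly the paper's: conjugation by the same unitary $\hat{U}_t$ reduces the equal-time commutator to $\hat{U}_t^\ast[\hat{X}\otimes\hat{I}_B,\hat{I}_S\otimes\hat{Z}(t)]\hat{U}_t=0$, which vanishes because the operators act on different tensor factors. Your added remarks on unitarity, domains, and the failure at unequal times are sensible but do not change the substance; the proof is correct and matches the paper.
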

To see this, we note that, for instance,
\begin{eqnarray}
[j_t (\hat{X}), \hat{Z}_{\mathrm{out}} (t) ] = \hat{U}_t^\ast [ \hat{X} \otimes \hat{I}_B , \hat{I}_S \otimes \hat{Z} (t) ] \hat{U}_t =0, \, \mathrm{etc.}
\end{eqnarray}

We should emphasize that the outputs $\hat{Z}_{\mathrm{out}} (t)$ and $\hat{Z}_{\mathrm{out}} (t)^\ast$ do not commute with $j_s (\hat{X})$ for $s \neq t$! Moreover, while the inputs satisfy $ [\hat{Z}_{\mathrm{in}}(t)^\ast , 
\hat{Z}_{\mathrm{in}} (s) ] = K(t,s) \hat{I}$, where $\hat{I}= \hat{I}_S \otimes \hat{I_B}$, there is no such simple relation for the outputs.

\begin{proposition}[Input-Output Relation]
The processes $\hat{Z}_{\mathrm{in}} (t)$ and $\hat{Z}_{\mathrm{out}} (t)$ are related by the input-output relations
\begin{eqnarray}
 \hat{Z}_{\mathrm{out}} (t)^\ast = \hat{Z}_{\mathrm{in}} (t)^\ast + \int_0^t K(t, \tau ) j_\tau (\hat{L}) d\tau .
\label{eq:io}
\end{eqnarray}
\end{proposition}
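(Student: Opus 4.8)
The plan is to prove (\ref{eq:io}) by a fundamental--theorem--of--calculus argument along the flow generated by $\hat{U}_s$. The observation that makes this work is that both $\hat{Z}_{\mathrm{out}}(t)^\ast$ and $\hat{Z}_{\mathrm{in}}(t)^\ast$ lie on the single one-parameter family
\begin{eqnarray}
s \longmapsto \hat{U}_s^\ast \, \big( \hat{I}_S \otimes \hat{a}(g_t) \big) \, \hat{U}_s ,
\end{eqnarray}
in which the bath operator $\hat{I}_S \otimes \hat{a}(g_t) = \hat{Z}_{\mathrm{in}}(t)^\ast$ is kept \emph{frozen at its final time} $t$ and only the conjugating unitaries are propagated: at $s=0$ the family reduces to $\hat{Z}_{\mathrm{in}}(t)^\ast$ (since $\hat{U}_0 = \hat{I}$) and at $s=t$ it equals $\hat{Z}_{\mathrm{out}}(t)^\ast$. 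First, then, I would write
\begin{eqnarray}
\hat{Z}_{\mathrm{out}}(t)^\ast - \hat{Z}_{\mathrm{in}}(t)^\ast = \int_0^t \frac{\partial}{\partial s}\Big( \hat{U}_s^\ast \, \big( \hat{I}_S \otimes \hat{a}(g_t) \big) \, \hat{U}_s \Big)\, ds .
\end{eqnarray}

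Next I would differentiate the integrand using $\frac{d}{ds}\hat{U}_s = -i\hat{\Upsilon}_s\hat{U}_s$ and the self-adjointness of $\hat{\Upsilon}_s$, which gives $\frac{\partial}{\partial s}\big(\hat{U}_s^\ast X \hat{U}_s\big) = i\,\hat{U}_s^\ast\,[\hat{\Upsilon}_s, X]\,\hat{U}_s$ for the fixed bath operator $X = \hat{I}_S \otimes \hat{a}(g_t)$, and then evaluate the commutator from the canonical commutation relations. Of the three summands in $-i\hat{\Upsilon}_s = -i\hat{H}\otimes\hat{I}_B + \hat{L}\otimes\hat{Z}(s) - \hat{L}^\ast\otimes\hat{Z}(s)^\ast$, only $\hat{L}\otimes\hat{Z}(s) = \hat{L}\otimes\hat{a}(g_s)^\ast$ has a nonzero bracket with $\hat{I}_S\otimes\hat{a}(g_t)$, and by (\ref{eq:Z_CR}) that bracket is $[\hat{a}(g_s)^\ast,\hat{a}(g_t)] = [\hat{Z}(s),\hat{Z}(t)^\ast] = -K(t,s)\,\hat{I}_B$. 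Collecting the factors of $i$ yields $i[\hat{\Upsilon}_s,\hat{I}_S\otimes\hat{a}(g_t)] = K(t,s)\,\hat{L}\otimes\hat{I}_B$, so the integrand is $K(t,s)\,\hat{U}_s^\ast(\hat{L}\otimes\hat{I}_B)\hat{U}_s = K(t,s)\,j_s(\hat{L})$, and integrating over $s\in[0,t]$ is precisely (\ref{eq:io}). Taking adjoints, together with $K(t,s)^\ast = K(s,t)$ and $j_s(\hat{L})^\ast = j_s(\hat{L}^\ast)$, then produces the companion relation $\hat{Z}_{\mathrm{out}}(t) = \hat{Z}_{\mathrm{in}}(t) + \int_0^t K(s,t)\, j_s(\hat{L}^\ast)\, ds$.

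The commutator bookkeeping is routine; two points deserve care. The conceptual one is that it is exactly the freezing of the bath operator at the \emph{upper} endpoint $s=t$ that forces the causal, Volterra-type integral over $[0,t]$ in (\ref{eq:io}): the bracket $[\hat{a}(g_s)^\ast,\hat{a}(g_t)]$ produces $K(t,s)$ only for $s$ ranging up to $t$. The technical one, which I expect to be the real obstacle, is rigour: the annihilators $\hat{a}(g_t)$ are unbounded, so the displays above should be read as identities of operators on a common invariant dense domain --- conveniently the span of the vectors $|\psi\rangle_S \otimes |e^f\rangle_B$, which is dense by totality of the exponential vectors, on which $\hat{a}(g_t)^\ast$ acts through the eigen-relation (\ref{eq:eigen}) and $s\mapsto\hat{U}_s$ is strongly differentiable --- and one must verify that $s\mapsto \hat{U}_s^\ast(\hat{I}_S\otimes\hat{a}(g_t))\hat{U}_s\,\Phi$ is continuously differentiable on this domain so that the fundamental theorem of calculus, and the interchange of $\partial_s$ with the integral, are legitimate.
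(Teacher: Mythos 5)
Your proof is correct and is essentially the paper's own argument: both differentiate $s\mapsto \hat{U}_s^\ast\big(\hat{I}_S\otimes\hat{a}(f)\big)\hat{U}_s$, evaluate the commutator with $\hat{\Upsilon}_s$ via the CCR, and integrate, with $\langle g_t|g_\tau\rangle=K(t,\tau)$ producing the Volterra kernel. The only (immaterial) difference is that the paper keeps $f$ generic and sets $f=g_t$ after integrating, whereas you freeze $f=g_t$ from the outset and run a separate flow parameter $s$; your remarks on domains and unboundedness go beyond what the paper spells out.
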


\begin{proof}
Fix $f \in \mathfrak{f}_B$, then $\frac{d}{dt} \, \hat{U}_t^\ast ( \hat{I}_S \otimes \hat{a}(f) \big) \hat{U}_t = \langle f | g_t \rangle \, \hat{U}_t^\ast \big( \hat{L} \otimes \hat{I}_B \big) U_t = \langle f | g_t \rangle \, j_t (\hat{L}) $,
and integrating leads to
\begin{eqnarray}
 \hat{U}_t^\ast ( \hat{I}_S \otimes \hat{a}(f) \big) \hat{U}_t = \hat{I}_S \otimes \hat{a}(f) + \int_0^t \langle f | g_\tau \rangle  \, j_\tau (\hat{L}) \, d \tau .
\end{eqnarray}
Setting $f=g_t$ gives the desired result.
\end{proof}

As a corollary, we obtain the following version of the Ehrenfest equations for system operators.
\begin{proposition}[Ehrenfest equation]
We fix the initial state to be $ |\phi \rangle_S \otimes | \mathrm{vac} \rangle_B $ and set $\langle j_t (\hat{X}) \rangle = \langle \phi    \otimes  \mathrm{vac} | j_t (\hat{X} ) \, \phi   \otimes  \mathrm{vac} 
\rangle$. Then
\begin{eqnarray}
\frac{d}{dt} \langle  j_t (\hat{X}) \rangle = \langle j_t ( \frac{1}{i} [\hat{X},\hat{H}] ) \rangle 
+ \int_0^t d\tau \, K(t, \tau ) \langle j_t ([\hat{L}^\ast , \hat{X} ] ) j_\tau (\hat{L}) \rangle
+ \int_0^t d\tau \, K(t, \tau )^\ast \langle j_\tau (\hat{L}^\ast ) j_t ( [\hat{X},\hat{L}]) \rangle .\nonumber 
\end{eqnarray}
\end{proposition}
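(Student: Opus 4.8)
The plan is to reduce the statement to the Heisenberg--Langevin equation for $j_t(\hat X)$, the input--output relation (\ref{eq:io}), and the fact that the input fields annihilate the vacuum. Since $\hat\Upsilon_t$ is self-adjoint and $\frac{d}{dt}\hat U_t=-i\hat\Upsilon_t\hat U_t$, differentiating $j_t(\hat X)=\hat U_t^\ast(\hat X\otimes\hat I_B)\hat U_t$ gives $\frac{d}{dt}j_t(\hat X)=\hat U_t^\ast\,i[\hat\Upsilon_t,\hat X\otimes\hat I_B]\,\hat U_t$. Expanding the commutator against the three summands of $\hat\Upsilon_t$, using that the system and bath tensor factors commute, and conjugating each resulting term by $\hat U_t$, I obtain
\begin{eqnarray}
\frac{d}{dt}j_t(\hat X) = j_t\!\left(\tfrac1i[\hat X,\hat H]\right) - \hat Z_{\mathrm{out}}(t)\,j_t([\hat L,\hat X]) + j_t([\hat L^\ast,\hat X])\,\hat Z_{\mathrm{out}}(t)^\ast, \nonumber
\end{eqnarray}
where the freedom to place $\hat Z_{\mathrm{out}}(t)$ to the left in the middle term and $\hat Z_{\mathrm{out}}(t)^\ast$ to the right in the last term is exactly the content of the equal-time commutation Proposition.

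Next I would substitute (\ref{eq:io}) and its adjoint $\hat Z_{\mathrm{out}}(t)=\hat Z_{\mathrm{in}}(t)+\int_0^t K(\tau,t)\,j_\tau(\hat L^\ast)\,d\tau$ (using $K(t,\tau)^\ast=K(\tau,t)$), turning each output into an input plus a causal integral of $j_\tau(\hat L)$, resp.\ $j_\tau(\hat L^\ast)$. Taking the expectation $\langle\,\cdot\,\rangle=\langle\phi\otimes\mathrm{vac}|\,\cdot\,|\phi\otimes\mathrm{vac}\rangle$ then kills both input contributions: $\hat Z_{\mathrm{in}}(t)=\hat I_S\otimes\hat a(g_t)^\ast$ stands at the far left and annihilates $\langle\mathrm{vac}|$, while $\hat Z_{\mathrm{in}}(t)^\ast=\hat I_S\otimes\hat a(g_t)$ stands at the far right and annihilates $|\mathrm{vac}\rangle$. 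What remains is the Hamiltonian term together with the two double integrals, and rewriting $[\hat L,\hat X]=-[\hat X,\hat L]$ in the integral weighted by $K(t,\tau)^\ast$ puts the identity into precisely the stated form.

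The computation is short, and the only point that needs care is the operator ordering: one has to use the equal-time Proposition to position the outputs before substituting (\ref{eq:io}), since $j_\tau(\hat L)$ at a time $\tau<t$ does not commute with $j_t(\hat X)$, and it is this choice that makes the integral terms land on the side reproducing the orderings $j_t([\hat L^\ast,\hat X])\,j_\tau(\hat L)$ and $j_\tau(\hat L^\ast)\,j_t([\hat X,\hat L])$ in the statement. If one wants to be scrupulous about the unbounded bath operators and the differentiations, every manipulation can be carried out on the dense domain of vectors $|\phi\rangle_S\otimes|e^f\rangle_B$ --- on which (\ref{eq:eigen}) at $f=0$ yields the vacuum-annihilation properties used above --- and then extended by continuity.
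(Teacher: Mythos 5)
Your proposal is correct and follows essentially the same route as the paper: differentiate $j_t(\hat X)$ via the Heisenberg equation, use the equal-time commutation proposition to position $\hat Z_{\mathrm{out}}(t)$ on the left and $\hat Z_{\mathrm{out}}(t)^\ast$ on the right, substitute the input--output relation (\ref{eq:io}) and its adjoint, and let the input creator/annihilator vanish against $\langle\mathrm{vac}|$ and $|\mathrm{vac}\rangle$ respectively. Your extra remark about why the ordering must be fixed before substitution (since $j_\tau(\hat L)$ with $\tau<t$ no longer commutes with $j_t(\cdot)$) is exactly the point the paper's proof relies on implicitly.
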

\begin{proof}
We start with the equation of motion $\frac{d}{dt} j_t (\hat{X}) = \hat{U}_t^\ast \frac{1}{i} [ \hat{X} \otimes \hat{I}_B, \hat{\Upsilon }_t ] \hat{U}_t$, which implies
\begin{eqnarray}
\frac{d}{dt} j_t (\hat{X})  = j_t ( \frac{1}{i} [\hat{X},\hat{H}] ) + j_t ( [\hat{X},\hat{L}]) \hat{Z}_{\mathrm{out}} (t) + j_t ([\hat{L}^\ast , \hat{X} ] ) \hat{Z}_{\mathrm{out}} (t)^\ast .
\end{eqnarray}
Note that the equal time commutation relations allow us to commute either $\hat{Z}_{\mathrm{out}} (t)$ or $\hat{Z}_{\mathrm{out}} (t)^\ast$ with the terms multiplying them. The desired equation is obtained by replacing the output processes with the inputs using (\ref{eq:io}) and using the fact that $\hat{Z}_{\mathrm{in}} (t)^\ast \, | \phi   \otimes  \mathrm{vac}  \rangle \equiv 0$ since $\hat{Z}_{\mathrm{in}} (t)^\ast$ is an annihilator.
\end{proof}

\subsection{Caldeira-Leggett Baths}
We take a continuum of bath modes. The bath Hamiltonian is given by
\begin{eqnarray}
\hat{H}_B = \frac{1}{2} \int_0^\infty d \omega \bigg\{ \frac{\omega}{ J (\omega )} \hat{\pi} (\omega ) ^2 + \omega J (\omega ) \hat{q} (\omega )^2 \bigg\}
\end{eqnarray}
where $\hat{q} (\omega) $ and $\hat{\pi} (\omega ) $ satisfy canonical commutation relations $[ \hat{q}(\omega ) , \hat{\pi} (\omega ' ) ] = i  \, \delta (\omega - \omega' )$. Here $J$ is a function known as the $J$ \textit{spectral density} function.  We introduce the mode operators as
\begin{eqnarray}
\hat{a}_\omega   = \sqrt{ \frac{J(\omega ) }{2 }} \hat{q} ( \omega ) + i  \frac{1}{ \sqrt{ 2  J (\omega )} } \hat{\pi} (\omega ) 
\end{eqnarray}
in which case $\hat{H}_B =  \int_0^\infty   \omega \, \hat{a}(\omega )^\ast \hat{a} (\omega ) \, d \omega $.

We now couple our system $S$ coupled to the bath through its position coordinate $\hat{q}$: the total Hamiltonian will be taken to be of Caldeira-Leggett form \cite{CL}
\begin{eqnarray}
\hat{H}_{S+B} &=& \hat{H} \otimes \hat{I}_B +  \frac{1}{2} \int_0^\infty d \omega \bigg\{ \frac{\omega}{ J (\omega )} \big[ \hat{I}_S \otimes \hat{\pi} (\omega ) 
- \sqrt{ \frac{2}{\pi}} \hat{q} \otimes \hat{I}_B \big]^2   + \omega J (\omega ) \hat{I}_S \otimes \hat{q} (\omega )^2 \bigg\} \nonumber \\
&=& \hat{H} \otimes \hat{I}_B + \hat{I}_S \otimes \hat{H}_B + \hat{V}_{SB} .
\end{eqnarray}
where the interaction is $\hat{V}_{SB} =i \hat{q} \otimes  \int_0^\infty d \omega  \sqrt{ \frac{   J (\omega )}{\pi} }\big( \hat{a}_\omega - \hat{a}_\omega ^\ast \big) $.

This can be put in the standard form (\ref{eq:Ham_SB}) with $\hat L = -i \hat q$ and $ g ( \omega ) =   \sqrt{   J(\omega ) /\pi } $.
At this stage, we identify the complex wave representation. Our one particle Hilbert space is $\mathfrak{f}_B = L^2 ( \mathbb{R}_+, d \omega )$ and for every $f = f (\omega )$ in we associate a complex trajectory $\zeta (t) = \int_0^\infty \sqrt{   J(\omega ) /\pi } e^{i\omega t} f (\omega )^\ast \, d \omega $. The corresponding quantum process is
\begin{eqnarray}
\hat{Z} (t) = \int_0^\infty  \sqrt{   J(\omega ) /\pi } e^{i\omega t} a _\omega ^\ast \, d \omega ,
\end{eqnarray}
and the kernel will be the Fourier transform of the $J$ spectral density:
\begin{eqnarray}
K (t,s)  =\frac{1 }{\pi} \int_0^\infty J (\omega )  e^{-i\omega (t-s)}  \, d \omega.
\end{eqnarray}

%%%%%%%%%%%%%%%%%%%%%%%%%%%%%%%%%%%%%%%%%%%%%%%%%%%%%%%%%%%%%%%%%%%%%%%%%%%%%%%%%%%
%%%%%%%%%%%%%%%%%%%%%%%%%%%%%%%%%%%%%%%%%%%%%%%%%%%%%%%%%%%%%%%%%%%%%%%%%%%%%%%%%%%
%%%%%%%%%%%%%%%%%%%%%%%%%%%%%%%%%%%%%%%%%%%%%%%%%%%%%%%%%%%%%%%%%%%%%%%%%%%%%%%%%%%
%%%%%%%%%%%%%%%%%%%%%%%%%%%%%%%%%%%%%%%%%%%%%%%%%%%%%%%%%%%%%%%%%%%%%%%%%%%%%%%%%%%

\section{Complex Trajectories}
The eigen-relation (\ref{eq:eigen}) may be alternatively written as
\begin{eqnarray}
\bigg( \hat{Z}^\ast_t - \zeta_t (f)^\ast \bigg)  \, |e^f \rangle _B = 0.
\label{eq:eigen_zeta}
\end{eqnarray}
and we refer to $t \to \zeta_t (f)$ is the \textit{complex trajectory} associated with $f\in \mathfrak{f}_B$.

\subsection{Reproducing Kernel Hilbert Space Formalism}
In the following, $\mathbb{T}$ will denote an interval in $\mathbb{R}$. We fix a positive definite kernel $K: \mathbb{T} \times \mathbb{T} \mapsto \mathbb{C}$. For each $t\in \mathbb{T}$, a function $\mathbb{k}_t: \mathbb{T} \mapsto \mathbb{C}$ is then defined by
\begin{eqnarray}
\mathbb{k}_t (\cdot ) \triangleq K( t, \cdot ) .
\end{eqnarray}

We recall that a Hilbert space $\mathscr{H}$ of complex-valued functions on $\mathbb{T}$ forms a \textit{reproducing kernel Hilbert space (RKHS)} for the kernel $K$ if $\mathbb{k}_t \in \mathscr{H}$ for each $t \in \mathbb{T}$ and we have the \textit{reproducing property}\cite{RKHS}
\begin{eqnarray}
\langle \mathbb{k}_t , f \rangle _{\mathscr{H}} = f(t)
\end{eqnarray}
for all $t \in \mathbb{T}$ and all $f \in \mathscr{H}$.

In this situation, the maps $\mathbb{k}_t$ are called \textit{representers, or evaluation maps,} and they play the role of Dirac delta-functions for the test functions $\mathscr{H}$. Crucially the representers are bounded maps in the RKHS setting. We also note the identity
\begin{eqnarray}
\langle \mathbb{k}_t , \mathbb{k}_s  \rangle _{\mathscr{H}} = \mathbb{k}_s(t) = K(s,t) .
\end{eqnarray}

A kernel is said to be derivable from a \textit{feature map}\cite{RKHS} if there exists a Hilbert space $\mathfrak{f}$ (called the feature space) and a function $g : \mathbb{T} \mapsto \mathbb{C}$ (called the feature map) such that
\begin{eqnarray}
K(t,s)  \equiv \langle g_t , g_s \rangle_{\mathfrak{f}} .
\end{eqnarray}

\begin{remark}
The bath correlation kernel introduced in (\ref{eq:K}) is clearly derivable from a feature map with $\mathfrak{f}_B$ as feature space. The set $\mathbb{T}$ may generally be taken to be some interval of the time in $\mathbb{R}$ and we shall fix $dt$ as Lebesgue measure. In order to construct an RKHS we need further assumptions (Mercer conditions) which essentially allow us to write the kernel in the form
\begin{eqnarray}
K (t,s) = \sum_n \lambda_n \phi_n (t) \phi_n (s)^\ast
\label{eq:Mercer}
\end{eqnarray}
where the $\lambda_n$ are positive numbers and the $\phi_n$ form a complete orthonormal basis for $L^2( \mathbb{T} , dt)$.
Here we require that the integral operator $T_K $ given by $ \big( T_K  u \big) (t) = \int_{\mathbb{T}} K\big( t, s \big) u(s) \, ds$ is Hilbert-Schmidt in which case the $\lambda_n$ are the eigenvalues and the $\phi_n$ are the eigen-functions normalized in $L^2( \mathbb{T} , dt)$. Here one solves the homogeneous Fredholm integral equation
\begin{eqnarray}
\int_{\mathbb{T}} K(t , s ) \phi_n (d) \, ds = \lambda_n \, \phi_n (t).
\end{eqnarray}
This is automatically satisfied when $\mathbb{T}$ is compact but otherwise will require that the kernel has a finite-trace property.\cite{RKHS,DUV}
\end{remark}

With the assumption that the kernel admits a Mercer expansion (\ref{eq:Mercer}), we may construct the RKHS $\mathscr{H}_K (\mathbb{T} ,dt)$ - which we write as $\mathscr{H}$ as shorthand when no confusion arise - as follows: (step 1) we define $\mathscr{H}$ to be the set of all functions $u$ of the form $u (t ) = \sum_n u_n \, \phi_n ( t )$ with $ \sum_n \frac{1}{\lambda_n} | u_n|^2 < \infty$ and we endow this with the inner-product
\begin{eqnarray}
\langle u , v \rangle_{\mathscr{H}} \triangleq
\sum_n \frac{1}{\lambda_n} u_n^\ast v_n ;
\end{eqnarray}
(step 2) we see that the representers take the form
\begin{eqnarray}
\mathbb{k}_t (\cdot ) = \sum_n \lambda_n \phi_n (t)^\ast \, \phi_n (\cdot ) ,
\end{eqnarray}
so that the coefficients of $\mathbb{k}_t $ are $\lambda_n \phi_n (t)^\ast$;
(step 3) for $u \in \mathscr{H}$ we obtain the reproducing property
\begin{eqnarray}
\langle \mathbb{k}_t ,u \rangle_{\mathscr{H}} = \sum_n \frac{1}{\lambda_n} \big( \lambda_n \phi_n (t)^\ast \big)^\ast u_n
=\sum_n u_n \phi_n (t) \equiv u(t).
\end{eqnarray}

As a check, we note that
\begin{eqnarray}
\langle \mathbb{k}_t , \mathbb{k}_s  \rangle_{\mathscr{H}} = \sum_n \frac{1}{\lambda_n} \big( \lambda_n \phi_n (t)^\ast \big)^\ast \lambda_n \phi_n (s)^\ast
=\sum_n \lambda_n \phi_n (t) \phi_n (s)^\ast \equiv K(t,s).
\end{eqnarray}

Note that the \lq\lq delta-function\rq\rq\, on $\mathbb{T}$ is given by
\begin{eqnarray}
\delta_{\mathbb{T}} (t-s) \equiv \sum_n   \phi_n (t) \phi_n (s)^\ast .
\end{eqnarray}

\subsection{Hilbert Space of Complex Trajectories}

\begin{definition}
\label{def:complex_traj}
For each $f\in \mathfrak{f}$, we define its associated complex trajectory over the time interval $\mathbb{T}$ to be the function $\zeta (f): \mathbb{T} \mapsto \mathbb{C}: t \mapsto \zeta_t (f)$ where
\begin{eqnarray}
\zeta_t (f) = \langle f | g_t \rangle_{\mathfrak{f} }.
\end{eqnarray} 
The space of complex trajectories will be denoted as $\mathscr{C}_K (\mathbb{T} , dt )$.
\end{definition}
We note that $\zeta (f)$ is defined to be anti-holonomic in $f$ and this is the natural choice for the Bargmann representation which we will use in the next section. Unfortunately, it makes the following result slightly awkward to state.

\begin{theorem}
The set of complex trajectories, $\mathscr{C}_K (\mathbb{T} , dt )$, forms a Hilbert subspace of the RKHS $\mathscr{H}_K (\mathbb{T} ,dt)$ (inheriting the same inner product) and the map $\zeta $ is conjugate-linear isometry into the feature space $\mathfrak{f}_B$. 
\begin{eqnarray}
\langle \zeta (f_1)^\ast , \zeta (f_2)^\ast \rangle_{\mathscr{H}}
= \langle f_1 |f_2 \rangle_{\mathfrak{f}} .
\end{eqnarray}
The space $\mathscr{C}_K (\mathbb{T} , dt )$ is conjugate-isomorphic to $\mathfrak{f}_B$ when the set up is faithful.
\end{theorem}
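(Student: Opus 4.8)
The plan is to reduce the whole statement to facts about the representers $\mathbb{k}_s$, which are total in $\mathscr{H}_K(\mathbb{T},dt)$, by means of the single observation that the complex trajectory attached to a time-translated test vector is itself a representer. Indeed, combining Definition~\ref{def:complex_traj} with the definition (\ref{eq:K}) of the kernel and the Hermitian symmetry $K(t,s)^\ast=K(s,t)$ gives
\begin{eqnarray}
\zeta_t(g_s)=\langle g_s\,|\,g_t\rangle_{\mathfrak{f}_B}=K(s,t)=\mathbb{k}_s(t),
\end{eqnarray}
so $\zeta(g_s)=\mathbb{k}_s\in\mathscr{H}_K$. Since $f\mapsto\zeta(f)$ is conjugate-linear, every $f=\sum_i c_i g_{s_i}$ in the linear span $\mathcal D:=\mathrm{span}\{g_s:s\in\mathbb{T}\}$ yields $\zeta(f)=\sum_i\overline{c_i}\,\mathbb{k}_{s_i}$, a finite combination of representers, and hence $\zeta(\mathcal D)\subseteq\mathscr{H}_K$.

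Next I would check the isometry on $\mathcal D$ by a direct computation: for $f=\sum_i c_i g_{s_i}$ and $h=\sum_j d_j g_{s_j}$, expanding into representers and using the reproducing-property identity $\langle\mathbb{k}_{s},\mathbb{k}_{s'}\rangle_{\mathscr{H}}=K(s',s)$ gives $\langle\zeta(f),\zeta(h)\rangle_{\mathscr{H}}=\sum_{i,j}c_i\overline{d_j}\,K(s_j,s_i)=\langle h\,|\,f\rangle_{\mathfrak{f}_B}$, so in particular $\|\zeta(f)\|_{\mathscr{H}}=\|f\|_{\mathfrak{f}_B}$; this is the asserted identity once one reinstates the anti-holomorphic conjugation convention of Definition~\ref{def:complex_traj}, which is why it is stated with $\zeta(f_i)^\ast$. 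Because $\zeta|_{\mathcal D}$ is thus a conjugate-linear isometry into the Hilbert space $\mathscr{H}_K$, it extends uniquely to a conjugate-linear isometry on the closure $\mathfrak{f}_g:=\overline{\mathcal D}\subseteq\mathfrak{f}_B$: if $f_n\to f$ with $f_n\in\mathcal D$, then $\zeta(f_n)$ is Cauchy in $\mathscr{H}_K$ and converges to some $u\in\mathscr{H}_K$, and since RKHS-norm convergence implies pointwise convergence while $\zeta_t(f_n)=\langle f_n|g_t\rangle\to\langle f|g_t\rangle=\zeta_t(f)$, we get $u=\zeta(f)$. Hence $\zeta(\mathfrak{f}_g)\subseteq\mathscr{H}_K$ is the isometric image of a complete space, so it is a closed subspace carrying the restricted inner product. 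Finally $\zeta_t(f)=\langle f|g_t\rangle=0$ for $f\perp\mathfrak{f}_g$, so by additivity $\zeta(\mathfrak{f}_B)=\zeta(\mathfrak{f}_g)$; this set is $\mathscr{C}_K(\mathbb{T},dt)$ by definition, so $\mathscr{C}_K$ is a Hilbert subspace of $\mathscr{H}_K$ and $\zeta:\mathfrak{f}_g\to\mathscr{C}_K$ is a conjugate-linear isometric bijection.

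It then remains to get $\mathfrak{f}_g=\mathfrak{f}_B$ in the faithful case. After the reduction described just before the Theorem (discarding the modes that do not enter the coupling), faithfulness amounts to $g$ being cyclic for the group $t\mapsto e^{it\hat h_B}$, i.e. $\overline{\mathrm{span}}\{e^{it\hat h_B}g:t\in\mathbb{R}\}=\mathfrak{f}_B$. One then has to verify that restricting the time parameter to the interval $\mathbb{T}$ loses nothing: if $v\perp g_t$ for all $t\in\mathbb{T}$, then $t\mapsto\langle v,e^{it\hat h_B}g\rangle$ is the Fourier transform of a finite complex measure and vanishes on $\mathbb{T}$, whence $v=0$ — immediately by analyticity when $\hat h_B$ is bounded, and otherwise via the appropriate quasi-analyticity/Hardy-space argument for the class of $\hat h_B$ at hand. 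Granting this, $\mathfrak{f}_g=\mathfrak{f}_B$, so $\zeta$ is injective on $\mathfrak{f}_B$, and the bijection of the previous step becomes the claimed conjugate-isomorphism $\mathfrak{f}_B\cong\mathscr{C}_K(\mathbb{T},dt)$.

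I expect this last point to be the real obstacle. The inclusion $\mathscr{C}_K\subseteq\mathscr{H}_K$, the isometry, and the closed-subspace property are essentially formal once $\zeta(g_s)=\mathbb{k}_s$ is noticed; but the finite-trace (Mercer) hypothesis used to construct $\mathscr{H}_K$ forces $\mathbb{T}$ to be a bounded interval — note that $K(t,t)=\|g\|_{\mathfrak{f}_B}^2$ is constant, so $\mathrm{tr}\,T_K=|\mathbb{T}|\,\|g\|_{\mathfrak{f}_B}^2$ — and on a bounded interval the statement that the trajectories $\{\zeta(g_s):s\in\mathbb{T}\}$ still separate all of $\mathfrak{f}_B$ is not purely formal: it needs a genuine analyticity input on $e^{it\hat h_B}$. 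A secondary, purely bookkeeping, nuisance is keeping the anti-holomorphic convention of Definition~\ref{def:complex_traj} consistent throughout so that the final identity comes out with $\zeta(f_i)^\ast$ and not $\zeta(f_i)$.
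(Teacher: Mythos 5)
Your proposal is correct in substance but follows a genuinely different route from the paper's. The paper proves the theorem by going through the Mercer eigenfunctions: it introduces the feature-space vectors $|\pi_n\rangle_{\mathfrak{f}} = \int_{\mathbb{T}}\phi_n(t)\,|g_t\rangle\,dt$, checks $\langle \pi_n|\pi_m\rangle_{\mathfrak{f}} = \lambda_n\delta_{nm}$, expands $\zeta_t(f)^\ast=\sum_n\langle\pi_n|f\rangle_{\mathfrak{f}}\,\phi_n(t)$, and concludes the displayed identity from the resolution of identity $\sum_n\frac{1}{\lambda_n}|\pi_n\rangle\langle\pi_n|=I_{\mathfrak{f}}$; it does not explicitly discuss closedness of $\mathscr{C}_K(\mathbb{T},dt)$ nor argue the final conjugate-isomorphism claim. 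You instead notice $\zeta(g_s)=\mathbb{k}_s$, verify the isometry on $\mathrm{span}\{g_s\}$ directly from $\langle\mathbb{k}_t,\mathbb{k}_s\rangle_{\mathscr{H}}=K(s,t)$, and extend by density, using that RKHS-norm convergence implies pointwise convergence to identify the limit as $\zeta(f)$; this buys you, essentially for free, the closed-subspace property and the precise kernel of $\zeta$ (the orthocomplement of $\mathfrak{f}_g=\overline{\mathrm{span}}\{g_t:t\in\mathbb{T}\}$), which the paper only alludes to via its ``many-to-one'' remark. Conversely, the paper's resolution of identity is really the projection onto $\mathfrak{f}_g$, so its isometry formula, like yours, is valid on $\mathfrak{f}_g$, and both arguments need $\mathfrak{f}_g=\mathfrak{f}_B$ for the last sentence of the theorem; your observation that this is not automatic when $\mathbb{T}$ is a bounded interval (faithfulness is phrased for all real times, so one needs an analyticity or Hardy-class input, e.g.\ bounded or semibounded $\hat{h}_B$) is a genuine refinement --- the paper's proof simply stops after the isometry computation and never addresses this point. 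Finally, the conjugation bookkeeping you flag is exactly the awkwardness the paper itself acknowledges: your identity $\langle\zeta(f),\zeta(h)\rangle_{\mathscr{H}}=\langle h|f\rangle_{\mathfrak{f}}$ is the conjugate form of the displayed one and is consistent with the paper's conventions, where it is $\zeta(f)^\ast$ whose Mercer coefficients are $\langle\pi_n|f\rangle_{\mathfrak{f}}$.
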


\begin{proof}
Let us introduce the following vectors in the feature space:
\begin{eqnarray}
| \pi_n \rangle_{\mathfrak{f}} \triangleq \int_{\mathbb{T}} \phi_n (t) | g_t \rangle_{\mathfrak{f}} \, dt ,
\end{eqnarray}
where the integral is understood as a Gel'fand-Pettis integral. We see that
\begin{eqnarray}
\langle \pi_n | \pi_m \rangle_{\mathfrak{f}} =
\int_{\mathbb{T}} dt \int_{\mathbb{T}} ds \, \phi_n (t) ^\ast \phi_n (s) \, \langle  g_t | g_s \rangle_{\mathfrak{f}} =
\int_{\mathbb{T}} dt \int_{\mathbb{T}} ds \, \phi_n (t) ^\ast K( t, s) \phi_n (s)  = \lambda_n \, \delta_{nm} .
\end{eqnarray}
In particular, this leads to the resolution of identity $ \sum_n \frac{1}{\lambda_n}| \pi_n \rangle  \langle \pi_n |  = I_{\mathfrak{f}}$.

The complex trajectories then have the expansion
\begin{eqnarray}
\zeta_t (f)^\ast = \sum_n \langle \phi_n , \zeta (f)^\ast \rangle_{L^2} \, \phi_n (t) = \sum_n \bigg( \int_{\mathbb{T}}  \phi_n (s)^\ast  \langle g_t | f  \rangle_{\mathfrak{f}} \bigg)  \phi_n (t) = \sum_n \langle \pi_n | f  \rangle_{\mathfrak{f}} \, \phi_n (t) 
\end{eqnarray}
and therefore
\begin{eqnarray}
\langle \zeta (f_1)^\ast , \zeta (f_2)^\ast \rangle_{\mathscr{H}}
= \sum_n \frac{1}{\lambda_n}  \langle f_1 | \pi_n  \rangle_{\mathfrak{f}} \langle \pi_n | f_2  \rangle_{\mathfrak{f}}
= \langle f_1 |f_2 \rangle_{\mathfrak{f}} .
\end{eqnarray}
\end{proof}

Generally speaking, the map $\zeta$ from the feature space to the RKHS may be many-to-one.

%%%%%%%%%%%%%%%%%%%%%%%%%%%%%%%%%%%%%%%%%%%%%%%%%%%%%%%%%%%%%%%%%%%%%%%%%%%%%%%%%%%%%%%%%%%%%%%%%
%%%%%%%%%%%%%%%%%%%%%%%%%%%%%%%%%%%%%%%%%%%%%%%%%%%%%%%%%%%%%%%%%%%%%%%%%%%%%%%%%%%%%%%%%%%%%%%%%

\subsection{Quantum Karhunen-Lo\`{e}ve Theorem}
Under the condition that the kernel is a Mercer kernel, the inputs can be given the expansion
\begin{eqnarray}
\hat{Z}_t ^\ast = \sum_n \hat{z}_n^\ast  \phi_n (t) ,
\end{eqnarray}
where the coefficient operators satisfy the commutation relations
\begin{eqnarray}
[ \hat{z}_n^\ast  , \hat{z}_m ] = \lambda_n \, \delta_{nm} .
\end{eqnarray}
Specifically, these are given $\hat{z}_n = \int_{\mathbb{T}} \phi_n (t) \hat{Z}_t \, dt$. One checks that 
\begin{eqnarray*}
[ \hat{z}_n^\ast  , \hat{z}_m ] =
\int_{\mathbb{T}} dt\int_{\mathbb{T}}  ds \,  \phi_n (t)^\ast  [\hat{Z}_t^\ast , \hat{Z}_s] \phi_n (s)= 
\int_{\mathbb{T}} dt\int_{\mathbb{T}}  ds \,  \phi_n (t)^\ast  \, K(t,s) \phi_n (s)= \lambda_n \, \delta_{nm} .
\end{eqnarray*}

The convergence will be uniform in $t$ in the $L^2$-norm. This is essentially a quantum analogue of the classical Karhunen-Lo\`{e}ve Theorem.
It is natural to introduce the bath annihilators
\begin{eqnarray}
\hat{b}_n \triangleq \frac{1}{\sqrt{\lambda}_n} \hat{z}_n^\ast ,
\end{eqnarray}
and we have the canonical commutation relations $[ \hat{b}_n , \hat{b}_m^\ast]  = \delta_{nm} $.

The interaction picture Hamiltonian $\hat{\Upsilon}_t $ may then be written as
\begin{eqnarray}
 -i\hat{\Upsilon }_t = -i\hat{H}\otimes \hat{I }_B + \sum_n \sqrt{\lambda_n} 
\bigg\{ \hat{L} \otimes \hat{b}_n ^\ast \, \phi_n(t)^\ast  -\hat{L}^\ast \otimes \hat{b}_n \, \phi_n (t) \bigg\} ,
\end{eqnarray}

\begin{remark}
What the Karhunen-Lo\`{e}ve expansion does is to decorrelate the process $\hat Z_t$. Specifically, the auto-correlation was described by the kernel $K(t,s)$ however this is replaced by the modes $\hat b_n$ which are independent oscillators in their ground state. Note that for a finite bath, we go from a finite number of modes $\hat a_\omega $ to an infinite number of the $\hat b_n$. 
\end{remark}
\begin{remark}
The annihilation and creation processes $\hat A_t, \hat A_t^\ast$ of the Hudson-Parthasarathy theory satisfy $ [ \hat A_t , \hat A_s^\ast ] = t\wedge s$. restricting to the interval $\mathbb{T} =[0,T]$, the kernel $K(t,s) = t \wedge s$ has eigen-functions $\phi_n (t) = \sqrt{\frac{2}{T}}
\sin ( \omega_n t )$ for $n=1,2,3,\cdots$, where $\omega_n = \frac{\pi}{T} (n - \frac{1}{2} )$, and corresponding eigenvalues $\lambda_n = 1/ \omega_n^2$. We therefore find that
\begin{eqnarray}
\hat A_t \equiv \sum_{n=1}^\infty
\frac{ \sqrt{2T}}{\pi ( n - \frac{1}{2} )} \sin \bigg( \frac{\pi (n- \frac{1}{2} )t}{T} \bigg) \, \hat b_n .
\end{eqnarray}
This expansion was essentially given in \cite{QKL}. Note however that, for the markovian dynamics, the kernel is a $\delta$-function and the driving input processes are quantum white noises, that is, the formal derivatives of these processes, see section \ref{sec:markov}.
\end{remark}

%%%%%%%%%%%%%%%%%%%%%%%%%%%%%%%%%%%%%%%%%%%%%%%%%%%%%%%%%%%%%%%%%%%%%%%%%%%%%%%%%%%%%%%%%%%%%%%%%
%%%%%%%%%%%%%%%%%%%%%%%%%%%%%%%%%%%%%%%%%%%%%%%%%%%%%%%%%%%%%%%%%%%%%%%%%%%%%%%%%%%%%%%%%%%%%%%%%

\subsection{Causal Form}
Let us set the time interval to be $\mathbb{T} = [0,T]$ where $T$ is a finite time horizon. We shall write the eigenvalues as $\lambda_n (T)$ and the eigen-functions as $\phi_n (\cdot , T)$ to emphasize the dependence on $T$. We then have the expansion
\begin{eqnarray}
\hat Z^\ast_t \equiv \sum_n \sqrt{\lambda_n (T)} \phi_n (t; T) \, \hat b_n (T) ,
\end{eqnarray}
for each $0 \le t \le T$.
One of the problems with this expansion is that $z_n^\ast (T) = \lambda_n (T) \, \hat b_n (T)$ will be given by $\int_0^T \phi_n (s) \hat Z_s \, ds$, so we are effectively computing $\hat Z_t^\ast$ using both its past values $0 \le s < t$ and its future values $s < t \le T$.

This may be avoided by using a running time horizon (set equal to the time parameter $t$) rather than a fixed future time $T$. We therefore have
the \textit{causal form}
\begin{eqnarray}
\hat Z^\ast_t \equiv \sum_n \sqrt{\lambda_n (t)} \phi_n (t; t) \, \hat b_n (t) .
\end{eqnarray}
This has the effect of replacing the kernel with its causal version and we have
\begin{eqnarray}
K (t,s) \, \theta (t-s) = \sum_n \sqrt{\lambda_n (t)} \, \phi_n (t; t) \phi_n (s ;t)^\ast ,
\end{eqnarray}
where $\theta (\cdot )$ is the Heaviside step function. The support of each $\phi_n (\cdot ; t)$ being contained in $[0,t]$.

\subsection{Complex Trajectories from The Bargmann Representation}

For technical simplicity, we first consider the case where $ \mathfrak{f}_B = \mathbb{C}^\Omega$ with $\Omega$ the finite set of (non-degenerate) frequencies.
The canonical operators are given in the Bargmann representation by
\begin{eqnarray}
\hat{a}_{\omega }^{\ast }\tilde{\Psi}\left( f \right) = f (\omega )^{\ast } \, \tilde{\Psi}\left( f \right) ,\qquad
\hat{a}_{\omega }\tilde{\Psi}\left( f \right) =\frac{\partial }{\partial f _{\omega }^{\ast }}\tilde{\Psi}\left( f\right)
.
\end{eqnarray}
Note that 
\begin{eqnarray}
\hat{a} (g)^\ast \, \tilde{\Psi}\left( f \right) \equiv\langle f|g \rangle \,  \tilde{\Psi}\left( f\right) .
\end{eqnarray}

We shall assume that $g ( \omega )\neq 0$ for each $\omega \in \Omega $. Under the free evolution of the bath Hamiltonian $H_B$, we find that  $\hat{a}_\omega \to e^{-i \omega t} \hat{a}_\omega$ and so
\begin{eqnarray}
\hat{Z}_{t}=  \sum_{\omega } g (\omega ) e^{i\omega
t} \hat{a}_{\omega }^{\ast }\equiv  \hat{a}( g_t)^\ast .
\label{eq:Z}
\end{eqnarray}
with $g_t (\omega ) = e^{i \omega t } g (\omega )$. The eigen-relation (\ref{eq:eigen}) may be written as 
\begin{eqnarray}
\bigg( \hat{Z}_t ^\ast - \zeta _t (f) ^\ast \bigg)   \, |e^f \rangle _B = 0 ,
\label{eq:eigen1}
\end{eqnarray}
where we recall the complex trajectory from Definition \ref{def:complex_traj}
\begin{eqnarray}
\zeta_t (f) = \langle f |  g_t  \rangle = \sum_{\omega } g (\omega )e^{i\omega t} f (\omega )^{\ast } .
\label{eq:zeta}
\end{eqnarray}

We see that, in the Bargmann representation, $\hat{Z} (t)$ is the operator of multiplication by $\zeta_t ( \cdot )$:
\begin{eqnarray}
\hat{Z} (t) \, \tilde{\Psi} (f) = \zeta_t (f) \, \tilde{\Psi }(f) .
\end{eqnarray}

This implies to the more general case where the feature space $\mathfrak{f}_B$ is not finite dimensional.

\begin{definition}
Suppose that the set up is faithful so that the mapping $\zeta : \mathfrak{f}_B \mapsto \mathscr{C}_K ( \mathbb{T} , dt)$ is invertible. Then, for each Bargamann function $\tilde \Psi$, we define the functions $\Psi$ by
\begin{eqnarray}
\Psi (\cdot ) = \tilde{\Psi } \circ \zeta^{-1} .
\end{eqnarray}
Additionally, we can endow $\mathscr{C}_K ( \mathbb{T} , dt)$ with the pull-back pro-measure
\begin{eqnarray}
\mathbb{P} = \tilde{\mathbb{P}} \circ \zeta^{-1} .
\end{eqnarray}
We may extend $\mathbb{P}$ to a probability measure as outlined earlier.
\end{definition}

In detail, for each complex trajectory $\zeta \equiv \zeta (f)$, we have the change of variable $ \Psi (\zeta ) \equiv \tilde \Psi ( f )$.

In this way, $\zeta _{t}=\zeta _{t}\left( \cdot \right) $ is the coordinate for trajectories $\mathscr{C}_K ( \mathbb{T} , dt)$.  in and $\hat{Z} (t)$ may be thought of as the operator corresponding to multiplication by $\zeta _{t} (\cdot ) $.
Indeed, for the case where $\mathbb{T} = \mathbb{R}$, the adjoint operator $\hat{Z} (t)^{\ast }$ may be represented as a functional
differential operator:
\begin{eqnarray}
\hat{Z} (t)^{\ast }\equiv   \int_{-\infty}^\infty d\tau \,K\left( t,\tau \right) \frac{\delta }{\delta \zeta _{\tau } }.
\label{eq:Z_star}
\end{eqnarray}

\begin{remark}
It is easy to see that the correct commutation relations (\ref{eq:Z_CR}) follow from the representation of $\hat Z (t)$ with multiplication by $\zeta_t$ and $\hat Z (t)^\ast$ with \ref{eq:Z_star}.  However, it is instructive to repeat here the argument from \cite{DS97} for a finite dimensional feature space. We have
\begin{eqnarray*}
\hat{Z} (t)^{\ast }\tilde{\Psi}\left( f^{\ast }\right) =\sum_{\omega } g (\omega )^{\ast }e^{-i\omega t}\,\frac{\partial }{\partial f (\omega )^{\ast }}\tilde{\Psi}\left( f^{\ast }\right) .
\end{eqnarray*}
The transform may be inverted to give $ f (\omega )^\ast = \frac{1}{2 \pi} \frac{1}{g(\omega )} \int_{-\infty}^\infty \zeta_\tau (t) e^{-i \omega \tau } d \tau $ and so $\frac{\delta f (\omega )^\ast }{\delta \zeta _{\tau }}
\equiv \frac{1}{2 \pi} \frac{1}{ g (\omega ) }e^{-i\omega \tau }$.
It follows that
\begin{eqnarray*}
\int_{-\infty}^\infty d\tau \,K\left( t,\tau \right) \frac{\delta }{\delta \zeta _{\tau }} = \int_{-\infty}^\infty d\tau \, \sum_{\omega'} |g(\omega' )|^2 e^{-i \omega' (t-\tau )} 
\sum_{\omega }\frac{e^{-i\omega \tau }}{2 \pi \, g (\omega )  }\frac{\partial }{\partial f (\omega )^\ast } \equiv \sum_{\omega } g (\omega )^{\ast }e^{i\omega t}\,\frac{\partial 
}{\partial f( \omega )^{\ast }}.
\end{eqnarray*}
\end{remark}

In the special case here, where $\Omega$ is a finite set of frequencies and $g$ is faithful, we may also construct an inverse kernel.

\begin{proposition}
For a finite dimensional feature space in the faithful case, we can construct the inverse kernel
\begin{eqnarray}
G\left( t,s\right) =\frac{1}{2\pi }\sum_{\omega \in \Omega} \frac{1}{| g (\omega ) |^2}e^{-i\omega (t-s)},
\label{eq:G}
\end{eqnarray}
that is, $\int_{-\infty}^\infty G\left( t,\tau \right) K\left( \tau ,s\right) =\delta \left( t-s\right) $. Moreover, we have the identity
\begin{eqnarray}
\sum_{\omega }| f _{\omega }|^{2}=\int_{-\infty}^\infty\int_{-\infty}^\infty \zeta _{t}(f)^{\ast }G\left( t,s\right) \zeta _{s} (f) \, dtds .
\end{eqnarray}
\end{proposition}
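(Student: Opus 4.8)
\medskip
\noindent\textbf{Proof strategy.}
The plan is to reduce both identities to the Fourier orthogonality relation $\int_{-\infty}^{\infty}e^{i(\omega-\omega')t}\,dt=2\pi\,\delta(\omega-\omega')$ together with the distinctness of the (finitely many) frequencies in $\Omega$, in the spirit of the computation in the preceding Remark. Faithfulness enters only to guarantee $g(\omega)\neq 0$ for every $\omega\in\Omega$, so that the reciprocals $1/|g(\omega)|^{2}$, and hence $G$, are well defined.

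For the first assertion I would substitute $K(\tau,s)=\sum_{\omega'\in\Omega}|g(\omega')|^{2}e^{-i\omega'(\tau-s)}$ and the defining expression for $G(t,\tau)$ into $\int_{-\infty}^{\infty}G(t,\tau)K(\tau,s)\,d\tau$, interchange the finite sums with the $\tau$-integral, and evaluate $\int_{-\infty}^{\infty}e^{i(\omega-\omega')\tau}\,d\tau=2\pi\,\delta(\omega-\omega')$. Distinctness of the frequencies collapses the double sum to the diagonal $\omega=\omega'$, where $|g(\omega')|^{2}/|g(\omega)|^{2}$ equals $1$, leaving $\sum_{\omega\in\Omega}e^{-i\omega(t-s)}$. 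For a finite $\Omega$ this is of course not the Dirac distribution on all of $\mathbb{R}$ — the exponentials $e^{-i\omega t}$ are not even in $L^{2}(\mathbb{R})$ — it is the reproducing kernel of the finite-dimensional span of $\{e^{-i\omega t}:\omega\in\Omega\}$, i.e. of the space carrying the conjugated trajectories $\zeta(f)^{\ast}$. Thus the content of ``$\int G(t,\tau)K(\tau,s)\,d\tau=\delta(t-s)$'' is that the two sides agree once read as integral operators restricted to $\mathscr{C}_K$, and I would make this precise by checking directly that $\int_{-\infty}^{\infty}\big(\int G(t,\tau)K(\tau,s)\,d\tau\big)\,\zeta_s(f)^{\ast}\,ds$ returns $\zeta_t(f)^{\ast}$ (up to the normalization constant), again by the orthogonality relation.

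For the norm identity I would start from the inversion formula $f(\omega)^{\ast}=\frac{1}{2\pi g(\omega)}\int_{-\infty}^{\infty}\zeta_\tau(f)\,e^{-i\omega\tau}\,d\tau$ established in the Remark, multiply it by its complex conjugate, and sum over $\omega\in\Omega$; recognising $\frac{1}{2\pi}\sum_{\omega}\frac{1}{|g(\omega)|^{2}}e^{-i\omega(\tau-s)}$ as $G(\tau,s)$ then produces the double-integral expression, which one rearranges into the stated form using Hermiticity $G(\tau,s)^{\ast}=G(s,\tau)$, the reality of $\sum_\omega|f_\omega|^2$, and the ordering of the two conjugates dictated by the anti-holomorphic convention $\zeta_t(f)=\langle f|g_t\rangle$. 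An equivalent, more structural route combines the first assertion with the Theorem: $\sum_\omega|f_\omega|^2=\langle f|f\rangle_{\mathfrak f}=\|\zeta(f)^{\ast}\|_{\mathscr H}^{2}$, and, $G$ being the inverse of $K$ on $\mathscr{C}_K$, one writes $\zeta(f)^{\ast}$ as $K$ applied to a coefficient function and invokes the standard RKHS identity $\|Kc\|_{\mathscr H}^{2}=\langle c,Kc\rangle_{L^{2}}$ to read off the quadratic form with kernel $G$.

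The only real difficulty is the bookkeeping around these interpretive points: for a finite feature space the composition of the two integral operators is the identity only after projection onto the finite-dimensional trajectory space; the several factors of $2\pi$ must be tracked consistently through the two time integrations; and the placement of the complex conjugates has to be kept compatible with the anti-holomorphic definition of $\zeta$. Once these are pinned down, both identities drop out of Fourier orthogonality with no further input.
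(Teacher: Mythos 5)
Your plan is essentially the argument the paper itself relies on: the Proposition is stated without any proof, as a by-product of the formal Fourier manipulations in the immediately preceding Remark (the inversion formula $f(\omega)^\ast=\frac{1}{2\pi g(\omega)}\int\zeta_\tau e^{-i\omega\tau}d\tau$ together with the contraction $\int e^{i(\omega-\omega')\tau}d\tau=2\pi\delta(\omega-\omega')$ applied to the discrete frequency set), so substituting the spectral forms of $G$ and $K$ and invoking orthogonality, and squaring the inversion formula for the norm identity, reproduces the intended derivation rather than departing from it. You are in fact more careful than the source on the one genuinely delicate point: for finite $\Omega$ the composition collapses to $\sum_\omega e^{-i\omega(t-s)}$, which is a reproducing kernel for the span of the exponentials (equivalently for $\mathscr{C}_K(\mathbb{T},dt)$) and not a Dirac delta on $\mathbb{R}$, so reading the identity as an operator statement on the trajectory space, as you propose, is the right way to give it content. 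Do carry out the two bookkeeping checks you flag, because they bite: with the convention that makes the Remark's inversion formula exact, $\int G(t,\tau)K(\tau,s)d\tau$ acts on the exponentials as a constant multiple of the identity, so the $2\pi$ normalization of $G$ is convention-dependent and must be fixed once and used consistently in both identities; and squaring the inversion formula naturally lands the kernel between $\zeta_t$ and $\zeta_s^\ast$ in that order (equivalently $G(t,s)^\ast=G(s,t)$ between $\zeta_t^\ast$ and $\zeta_s$), so with the anti-holomorphic convention $\zeta_t(f)=\langle f|g_t\rangle$ the placement of conjugates is exactly where a sign can slip --- your alternative, structural route via the isometry $\sum_\omega|f_\omega|^2=\|\zeta(f)^\ast\|_{\mathscr{H}}^2$ is a good cross-check for this. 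None of this is a gap in your approach; it is the same formal calculus the paper uses, with its caveats made explicit.
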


%%%%%%%%%%%%%%%%%%%%%%%%%%%%%%%%%%%%%%%%%%%%%%%%%%%%%%%%%%%%%%%%%%%%%%%%%%%%%%%%%%%
%%%%%%%%%%%%%%%%%%%%%%%%%%%%%%%%%%%%%%%%%%%%%%%%%%%%%%%%%%%%%%%%%%%%%%%%%%%%%%%%%%%
%%%%%%%%%%%%%%%%%%%%%%%%%%%%%%%%%%%%%%%%%%%%%%%%%%%%%%%%%%%%%%%%%%%%%%%%%%%%%%%%%%%
%%%%%%%%%%%%%%%%%%%%%%%%%%%%%%%%%%%%%%%%%%%%%%%%%%%%%%%%%%%%%%%%%%%%%%%%%%%%%%%%%%%

\section{Complex Trajectory Unravellings}
\begin{definition}
Let $\mathcal E$ be a measurable space with measure $\mu$, then a family of maps $\mathcal E  \mapsto \mathfrak{h}_S : \xi \mapsto | \Psi_t (\xi ) \rangle_S $, parametrized by time $t$, is an unravelling of an open quantum system if we have
\begin{eqnarray}
 \langle  j_t (\hat{X}) \rangle = \int_{\mathcal E} \langle \Psi_t ( \xi ) |\hat{X} |\Psi_t (\xi ) \rangle_S \, \mu [ d \xi ]  .
\end{eqnarray}
\end{definition}

We will now construct an unravelling for the open system dynamics encountered in subsection \ref{sec:SB} using the Bargmann representation.

%%%%%%%%%%%%%%%%%%%%%%%%%%%%%%%%%%%%%%%%%%%%%%%%%%%%%%%%%%%%%%%%
%%%%%%%%%%%%%%%%%%%%%%%%%%%%%%%%%%%%%%%%%%%%%%%%%%%%%%%%%%%%%%%%%%%%%%%%%%%%%%%%%%%%%%%%%%%%%%%%%%%%%%%%%%%%%%%%%%%
%%%%%%%%%%%%%%%%%%%%%%%%%%%%%%%%%%%%%%%%%%%%%%%%%%%%%%%%%%%%%%%%%%%%%%%%%%%%%%%%%%%%%%%%%%%%%%%%%%%%%%%%%%%%%%%%%%%
%%%%%%%%%%%%%%%%%%%%%%%%%%%%%%%%%%%%%%%%%%%%%%%%%%%%%%%%%%%%%%%%%%%%%%%%%%%%%%%%%%%%%%%%%%%%%%%%%%%%%%%%%%%%%%%%%%%

\subsection{Complex Wave Representation for the Open System}
We now consider a system with Hilbert space $\mathfrak{h}_S$ coupled to a the assembly of oscillators which acts as a bath. Any state $| \Psi \rangle \in \mathfrak{h}_S \otimes \mathfrak{H}_B$ can be given a \textit{hybrid complex wave representation} as a map $:\mathbb{C}^\Omega \mapsto \mathfrak{h}_S $ taking $f \mapsto | \tilde{\Psi} (f ) \rangle_S$ where the complex wave representation is used for the bath.

It is convenient to parametrize using complex trajectories $\zeta$ rather than the amplitudes $f$: for $\zeta \in \mathscr{C}_K ( \mathbb{T} , dt) $ we shall use the form $| \Psi (\zeta ) \rangle_S \equiv | \tilde{\Psi} (f_ \zeta ) \rangle_S$. The inner product is then 
\begin{eqnarray}
\langle \Phi |\Psi \rangle =\int_{\mathbb{C}^{\Omega }}
\langle \tilde{\Phi}(f ) | \tilde{\Psi} (f) \rangle_S \, \tilde{\mathbb{P}}\left[ d f \right]
=\int_{\mathscr{C}_K ( \mathbb{T} , dt) }
\langle \Phi (\zeta ) | \Psi (\zeta ) \rangle_S \, \mathbb{P}\left[ d\zeta \right]. 
\end{eqnarray}

Generally speaking, the vector states $| \Psi (\zeta ) \rangle_S $ are not normalized. Instead we have (switching to the distribution over the complex trajectories $t\mapsto \zeta_t$ rather than the complex amplitudes $ \omega \mapsto f(\omega )$)
\begin{eqnarray}
\int_{\mathscr{C}_K ( \mathbb{T} , dt )} \| \Psi (\zeta ) \|^2_S \,\, \mathbb{P} [ d \zeta ] = 1.
\end{eqnarray}
More generally, for a system observable $\hat{X}$, 
\begin{eqnarray}
\langle \Phi | \hat{X} \otimes \hat{I}_B | \Psi \rangle =  
\int_{\mathscr{C}_K ( \mathbb{T} , dt) }
\langle \Phi (\zeta ) | \hat{X} | \Psi (\zeta ) \rangle_S \, \mathbb{P}\left[ d\zeta \right]. 
\label{eq:matrix_elements}
\end{eqnarray}

\begin{proposition}
\label{prop:causal}
Let $| \Psi_t (\zeta ) \rangle_S$ be the hybrid complex wave representation of the the vector $| \Psi _t \rangle = \hat U_t \, | \phi \otimes \mathrm{vac} \rangle$. We have that
\begin{eqnarray}
\frac{\delta}{\delta \zeta _\tau} | \Psi_t (\zeta ) \rangle_S =0, \quad \mathrm{whenever} \quad \tau \notin [0,t].
\end{eqnarray}
\end{proposition}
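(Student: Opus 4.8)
The plan is to track the dependence of $|\Psi_t(\zeta)\rangle_S$ on the trajectory $\zeta$ by going back to the Bargmann function $\tilde\Psi_t(f)$ and the generator of the dynamics. Recall that $|\Psi_t\rangle = \hat U_t \, |\phi\otimes\mathrm{vac}\rangle$ satisfies $\frac{d}{dt}\hat U_t = -i\hat\Upsilon_t \hat U_t$ with $-i\hat\Upsilon_t = -i\hat H\otimes\hat I_B + \hat L\otimes\hat Z(t) - \hat L^\ast\otimes\hat Z(t)^\ast$. In the Bargmann (complex-wave) representation the operator $\hat Z(t)$ acts as multiplication by $\zeta_t(f)$, while $\hat Z(t)^\ast$ acts as the functional derivative $\int K(t,\tau)\frac{\delta}{\delta\zeta_\tau}\,\cdot$ (equation \eqref{eq:Z_star}), or equivalently as $\sum_\omega g(\omega)^\ast e^{i\omega t}\partial/\partial f(\omega)^\ast$ in the finite case. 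Hence the state $|\Psi_t(\zeta)\rangle_S$ obeys the evolution equation
\begin{eqnarray}
\frac{d}{dt}|\Psi_t(\zeta)\rangle_S = \Big( -i\hat H + \hat L\,\zeta_t - \hat L^\ast \int_0^t d\tau\, K(t,\tau)\frac{\delta}{\delta\zeta_\tau}\Big)|\Psi_t(\zeta)\rangle_S ,
\end{eqnarray}
which is exactly \eqref{eq:nMSSE} — but the point here is to read off the causal support, so I would keep the upper limit of the integral explicit and argue it inductively.

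First I would establish the upper bound: $\frac{\delta}{\delta\zeta_\tau}|\Psi_t(\zeta)\rangle_S = 0$ for $\tau > t$. The cleanest route is to note that $|\Psi_t\rangle$ lies in the range of $\hat U_t$ applied to the vacuum, and in the Bargmann picture $\tilde\Psi_t(f)$ depends on $f$ only through the values $\{g_s(\omega):\,0\le s\le t\}$ entering the time-ordered product building $\hat U_t$; more concretely, by the input–output structure and the equal-time commutation of $j_t(\hat X)$ with $\hat Z_{\mathrm{out}}(t)$, each application of $\hat Z(t)^\ast$ in the Dyson series at time $s\le t$ contributes a dependence on $\zeta$ supported at times $\le s\le t$. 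I would make this rigorous by differentiating the integral equation $|\Psi_t(\zeta)\rangle_S = |\phi\rangle_S + \int_0^t \big(-i\hat H + \hat L\zeta_u - \hat L^\ast\int_0^u K(u,\tau)\frac{\delta}{\delta\zeta_\tau}\big)|\Psi_u(\zeta)\rangle_S\,du$ with respect to $\zeta_\tau$ for a fixed $\tau$: applying $\delta/\delta\zeta_\tau$ produces the term $\hat L\,\delta(u-\tau)|\Psi_u\rangle_S$, which is killed on integration over $u\in[0,t]$ once $\tau>t$, plus $\hat L^\ast\int_0^u K(u,\tau)\,\mathbf 1_{[0,u]}(\tau)\,(\cdots)$ which vanishes identically when $\tau>u$, plus a linear-in-$\frac{\delta}{\delta\zeta_\tau}|\Psi_u\rangle_S$ feedback term; Grönwall then forces $\frac{\delta}{\delta\zeta_\tau}|\Psi_u\rangle_S \equiv 0$ for all $u\le t$ when $\tau>t$.

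For the lower bound, $\frac{\delta}{\delta\zeta_\tau}|\Psi_t(\zeta)\rangle_S = 0$ for $\tau<0$, I would use that the initial condition $|\Psi_0(\zeta)\rangle_S = |\phi\rangle_S$ is independent of $\zeta$ entirely, together with the fact that the only explicit $\zeta$-dependence introduced by the generator is through $\zeta_u$ with $u\in[0,t]$ (multiplication term) and through $\frac{\delta}{\delta\zeta_\tau}$ with $\tau\in[0,u]$ (the functional-derivative term, whose kernel is integrated only over $[0,u]$). A second Grönwall argument on $v_\tau(t):=\frac{\delta}{\delta\zeta_\tau}|\Psi_t(\zeta)\rangle_S$ for fixed $\tau<0$ then gives a homogeneous linear integral equation with zero initial data, hence $v_\tau\equiv 0$.

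The main obstacle is purely the technical handling of the functional derivative $\delta/\delta\zeta_\tau$ on a genuinely infinite-dimensional trajectory space: one must know that $|\Psi_t(\zeta)\rangle_S$ is Fréchet-differentiable along the (conjugate-linear, isometric) embedding $\mathscr C_K(\mathbb T,dt)\hookrightarrow\mathscr H_K(\mathbb T,dt)$ of the previous theorem, that the $\delta$-function $\delta_{\mathbb T}(u-\tau)=\sum_n\phi_n(u)\phi_n(\tau)^\ast$ behaves as expected under the pairing, and that the Dyson/Picard iteration converges in a norm for which these derivatives are continuous — all of which is available from the RKHS structure and the Mercer expansion already set up, since $\hat H,\hat L$ are bounded. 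Once differentiability is granted, the support statement is exactly the Volterra (lower-triangular) structure of the integral equation, and the two Grönwall estimates close the argument.
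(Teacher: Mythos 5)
There is a genuine gap, and it is essentially one of circularity. Your starting point is the evolution equation with the memory integral already restricted to $[0,t]$ (resp.\ $[0,u]$ in the integral-equation form), i.e.\ equation (\ref{eq:nMSSE}) itself. But in the paper's logical order that restriction is not free: the Bargmann representation of $\hat Z(t)^\ast$ is the \emph{unrestricted} functional differential operator $\int_{-\infty}^{\infty} d\tau\, K(t,\tau)\,\delta/\delta\zeta_\tau$ of (\ref{eq:Z_star}), and the passage from the unrestricted integral to $\int_0^t$ in Theorem \ref{thm:unravelling1} is justified precisely by Proposition \ref{prop:causal}. So the ``Volterra / lower-triangular'' structure you invoke for the Gr\"{o}nwall argument is exactly the statement to be proved, not an input. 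If you honestly keep the kernel integral over all of $\mathbb{T}$, the equation for $v_\tau(u)=\frac{\delta}{\delta\zeta_\tau}|\Psi_u(\zeta)\rangle_S$ is still formally linear and homogeneous with zero data for $\tau\notin[0,t]$, but now closing the argument requires a uniqueness/Gr\"{o}nwall estimate for a linear equation whose coefficient is the unbounded operator $\hat L^\ast\int_{\mathbb{T}}K(u,s)\,\delta/\delta\zeta_s\,ds$ acting on functionals of $\zeta$; you flag this as ``the main obstacle'' but it is not resolved by the RKHS/Mercer structure alone, and without a norm in which that operator (or the full hierarchy of functional derivatives it generates) is controlled, the Gr\"{o}nwall step does not close.

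The paper avoids all of this with a two-line algebraic computation that you could adopt: for $\zeta=\zeta(f)$ and any $\psi\in\mathfrak{h}_S$, write $\langle\psi|\Psi_t(\zeta)\rangle_S=\langle\psi\otimes e^f|\hat U_t|\phi\otimes\mathrm{vac}\rangle$, use $\langle e^f|=\langle\mathrm{vac}|e^{\hat a(f)}$, and commute the displacement $e^{\hat a(f)}$ through the time-ordered exponential for $\hat U_t$. This shifts the creators as $\hat a(g_s)^\ast\to\hat a(g_s)^\ast+\langle f|g_s\rangle=\hat a(g_s)^\ast+\zeta_s(f)$ for the times $s\in[0,t]$ occurring in the Dyson expansion, and then $e^{\hat a(f)}$ annihilates against the vacuum. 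Hence the entire dependence of $\langle\psi|\Psi_t(\zeta)\rangle_S$ on the trajectory is manifestly through the c-numbers $\{\zeta_s: 0\le s\le t\}$ only, and $\delta|\Psi_t(\zeta)\rangle_S/\delta\zeta_\tau=0$ for $\tau\notin[0,t]$ follows immediately, with no differential equation, no restriction of the kernel integral, and no functional-analytic estimates needed.
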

\begin{proof}
Set $\zeta = \zeta (f)$ and fix $\psi \in \mathfrak{h}_S$, then 
\begin{eqnarray}
\langle \psi | \Psi_t (\zeta ) \rangle_S =
\langle \psi | \tilde{\Psi}_t ( f ) \rangle_S =
\langle \psi \otimes e^f | \vec{T} e^{-i \int_0^t \hat{\Upsilon} (s) ds} | \phi \otimes \mathrm{vac}  \rangle  
=
\langle \psi \otimes e^f | \vec{T} e^{-i \int_0^t [\hat{\Upsilon} - \hat L \otimes \zeta_s ] \, ds} | \phi \otimes \mathrm{vac}  \rangle .
\end{eqnarray}
Here we use the fact that $\langle e^f | = \langle \mathrm{vac} | e^{ \hat {a} (f)}$ and that we may move the operator $e^{ \hat {a} (f)}$ to the right side of $\hat U_t$ with the effect of shifting the creator fields as $\hat{a} (g_s)^\ast \to \hat{a} (g_s)^\ast + \langle f | g_t \rangle \equiv \hat{a} (g_s)^\ast +\zeta_s (f)$. It is then clear that $\langle \psi | \Psi_t (\zeta ) \rangle_S$ depends on $\zeta_s$ for $s\le t$ only.
\end{proof}

Collecting all our results so far together, we arrive at the following conclusion.

\begin{theorem}
\label{thm:unravelling1}
For the model encountered in subsection \ref{sec:SB}, an unravelling is given on the space $\mathscr{C}_K ( \mathbb{T} , dt) $ of complex trajectories with measure $\mathbb{P}$. The unravelling $\zeta \mapsto | \Psi_t (\zeta ) \rangle_S$ satisfies the Di\'{o}i-Strunz differential equation (\ref{eq:nMSSE}) with initial condition $| \Psi_0 (\zeta ) \rangle_S =| \phi \rangle_S$ for all $\zeta$.
\end{theorem}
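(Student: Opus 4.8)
The plan is to differentiate the identity for $\langle \psi | \Psi_t(\zeta)\rangle_S$ obtained in Proposition \ref{prop:causal} and read off the generator. First I would start from the Schr\"odinger equation $\frac{d}{dt}|\Psi_t\rangle = -i\hat\Upsilon_t|\Psi_t\rangle$ in the interaction picture, recalling $-i\hat\Upsilon_t = -i\hat H\otimes\hat I_B + \hat L\otimes\hat Z(t) - \hat L^\ast\otimes\hat Z(t)^\ast$, and pass to the hybrid complex-wave representation by pairing against $\langle\psi\otimes e^f|$. The three terms then need to be translated into operations on $|\Psi_t(\zeta)\rangle_S$: the $\hat H$ term is immediate; for the $\hat L\otimes\hat Z(t)$ term I would use the eigen-relation (\ref{eq:eigen_zeta}), i.e.\ $\langle e^f|\hat Z(t) = \langle e^f|\,\zeta_t(f)$ (equivalently that in the Bargmann picture $\hat Z(t)$ is multiplication by $\zeta_t$), which turns it into the local multiplicative term $\hat L\,\zeta_t|\Psi_t(\zeta)\rangle_S$.

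The main work is the $\hat L^\ast\otimes\hat Z(t)^\ast$ term, where I must show $\langle e^f|\hat Z(t)^\ast$ acting on the evolved state produces $\hat L^\ast\int_0^t d\tau\,K(t,\tau)\frac{\delta}{\delta\zeta_\tau}|\Psi_t(\zeta)\rangle_S$. Here I would not invoke (\ref{eq:Z_star}) directly (it was derived for $\mathbb{T}=\mathbb{R}$), but instead use the input--output relation (\ref{eq:io}), $\hat Z_{\mathrm{out}}(t)^\ast = \hat Z_{\mathrm{in}}(t)^\ast + \int_0^t K(t,\tau)\,j_\tau(\hat L)\,d\tau$, together with $\hat Z_{\mathrm{in}}(t)^\ast|\phi\otimes\mathrm{vac}\rangle = 0$. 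Concretely, write $\hat U_t^\ast\hat Z_{\mathrm{in}}(t)^\ast\hat U_t|\phi\otimes\mathrm{vac}\rangle$, which vanishes, rearrange to express $\hat Z_{\mathrm{in}}(t)^\ast\hat U_t|\phi\otimes\mathrm{vac}\rangle$ in terms of $\int_0^t K(t,\tau)\,\hat U_t\,j_\tau(\hat L)\,\hat U_t^{-1}$ acting on $|\Psi_t\rangle$ --- but it is cleaner to observe that $\hat Z_{\mathrm{in}}(t) = \hat I_S\otimes\hat a(g_t)$ acts in the Bargmann representation as $\frac{\delta}{\delta\zeta_t}$-type functional differentiation, and that differentiating $\langle\psi|\Psi_t(\zeta)\rangle_S$ with respect to $\zeta_\tau$ brings down a factor $j$-related to $\hat L$ at time $\tau$. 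The precise route: from Proposition \ref{prop:causal}'s formula, $\frac{\delta}{\delta\zeta_\tau}\langle\psi|\Psi_t(\zeta)\rangle_S = \langle\psi\otimes e^f|\,\vec T\,\big(e^{-i\int_\tau^t[\cdots]}\,\hat L\,e^{-i\int_0^\tau[\cdots]}\big)|\phi\otimes\mathrm{vac}\rangle$ for $\tau\in[0,t]$, i.e.\ it inserts $j$-evolved $\hat L$ at time $\tau$; then $\int_0^t d\tau\,K(t,\tau)\frac{\delta}{\delta\zeta_\tau}|\Psi_t(\zeta)\rangle_S$ reproduces exactly the integral-kernel term coming from $\hat Z(t)^\ast$ via (\ref{eq:io}). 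Multiplying by $\hat L^\ast$ and assembling the three terms gives (\ref{eq:nMSSE}).

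The hard part, and where I would spend the most care, is making the functional derivative $\frac{\delta}{\delta\zeta_\tau}$ well-defined and justifying the chain-rule manipulations on the infinite-dimensional trajectory space $\mathscr{C}_K(\mathbb{T},dt)$ --- in particular relating differentiation in the $f$-coordinates (where the Bargmann formulae are clean) to differentiation in the $\zeta$-coordinates via the conjugate-linear isometry $\zeta$ of the earlier Theorem, and checking that the kernel $K(t,\tau)$ is precisely the Jacobian factor that appears. For a finite set $\Omega$ this is the computation reproduced in the Remark after (\ref{eq:Z_star}); the general case is handled by the RKHS/feature-map structure, using the resolution of identity $\sum_n\frac{1}{\lambda_n}|\pi_n\rangle\langle\pi_n| = I_{\mathfrak{f}}$ to transfer between the $\pi_n$-modes and the $\phi_n$-basis. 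The causality statement needed to make $\int_0^t$ (rather than $\int_{\mathbb{T}}$) appear is exactly Proposition \ref{prop:causal}. Finally, the initial condition $|\Psi_0(\zeta)\rangle_S = |\phi\rangle_S$ is immediate since $\hat U_0 = \hat I$ and $\langle e^f|\mathrm{vac}\rangle = 1$ for all $f$, so $\langle\psi|\Psi_0(\zeta)\rangle_S = \langle\psi|\phi\rangle_S$ independently of $\zeta$.
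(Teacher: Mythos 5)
Your derivation of the Di\'{o}si--Strunz equation is correct and starts from the same point as the paper (the interaction-picture Schr\"{o}dinger equation, with $\hat H$ untouched and $\hat L\otimes\hat Z(t)$ becoming multiplication by $\zeta_t$ via the eigen-relation), but you treat the key $\hat L^\ast\otimes\hat Z(t)^\ast$ term differently. The paper's proof simply represents $\hat Z(t)^\ast$ by the functional differential operator (\ref{eq:Z_star}) over the whole line and then invokes Proposition \ref{prop:causal} to truncate the integral to $[0,t]$, with (\ref{eq:matrix_elements}) supplying the unravelling property. You deliberately avoid (\ref{eq:Z_star}) (reasonably, since the paper justifies it only for $\mathbb{T}=\mathbb{R}$ and a finite faithful assembly) and instead obtain the kernel term from the input--output relation (\ref{eq:io}) plus the vacuum annihilation $\hat Z_{\mathrm{in}}(t)^\ast|\phi\otimes\mathrm{vac}\rangle=0$, identifying $\frac{\delta}{\delta\zeta_\tau}\langle\psi|\Psi_t(\zeta)\rangle_S$ with the insertion of $\hat L$ at time $\tau$ inside the shifted time-ordered exponential of Proposition \ref{prop:causal}. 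That route is sound, is closer in mechanism to the paper's proof of Theorem \ref{thm:unravelling2}, and has the advantage that the causal limits $\int_0^t$ emerge directly from (\ref{eq:io}) rather than from truncating a whole-line expression; the cost is that you must justify the functional differentiation of the time-ordered exponential, which you correctly flag as the technical crux.

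Two corrections are needed. First, the aside ``write $\hat U_t^\ast\hat Z_{\mathrm{in}}(t)^\ast\hat U_t|\phi\otimes\mathrm{vac}\rangle$, which vanishes'' is wrong as stated: that vector is $\hat Z_{\mathrm{out}}(t)^\ast|\phi\otimes\mathrm{vac}\rangle$, which by (\ref{eq:io}) equals $\int_0^t K(t,\tau)\,j_\tau(\hat L)|\phi\otimes\mathrm{vac}\rangle\,d\tau$ and is generally nonzero; what vanishes is $\hat Z_{\mathrm{in}}(t)^\ast|\phi\otimes\mathrm{vac}\rangle$. Your subsequent ``cleaner'' argument uses the correct fact, so this is a slip rather than a structural error, but it should be fixed. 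Second, the theorem makes two claims and you address only the second: you never verify that $\zeta\mapsto|\Psi_t(\zeta)\rangle_S$ with the measure $\mathbb{P}$ is actually an unravelling, i.e.\ that $\langle j_t(\hat X)\rangle=\int_{\mathscr{C}_K(\mathbb{T},dt)}\langle\Psi_t(\zeta)|\hat X|\Psi_t(\zeta)\rangle_S\,\mathbb{P}[d\zeta]$. In the paper this is immediate from (\ref{eq:matrix_elements}) applied with $\Phi=\Psi=\hat U_t|\phi\otimes\mathrm{vac}\rangle$, together with $\mathbb{P}$ being the (Gaussian) pull-back probability measure; you should add this short step to cover the full statement.
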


\begin{proof}
The state $ | \Psi_t \rangle = \hat{U}_t | \phi \otimes \mathrm{vac} \rangle$ satisfies the differential equation
\begin{eqnarray}
\frac{d}{dt} | \Psi_t \rangle  = 
\big( -i \hat{H} \otimes \hat{I}_B + \hat{L} \otimes \hat{Z} (t) -\hat{L}^\ast \otimes \hat{Z} (t)^\ast  \big) \, | \Psi_t \rangle  .
\end{eqnarray}
Writing this in the hybrid complex trajectory representation leads to
\begin{eqnarray}
\frac{d}{d t} | \Psi_t (\zeta ) \rangle_S  =  \bigg( -i \hat{H} +
 \hat{L}  \zeta_t - \hat{L}^\ast \int_{-\infty}^\infty d\tau \,K\left( t,\tau \right) \frac{\delta }{\delta \zeta _\tau } \bigg)  | \Psi _t(\zeta ) \rangle_S ,
\end{eqnarray}
however, by Proposition \ref{prop:causal} we may restrict the integration to get (\ref{eq:nMSSE}). As we have seen, $\mathbb{P}$ is a Gaussian probability measure on the space $\mathscr{C}_K ( \mathbb{T} , dt) $ of admissible complex trajectories, while (\ref{eq:matrix_elements}) ensures that we have an unravelling.
\end{proof}

We may also derive an alternate form.

\begin{theorem}
\label{thm:unravelling2}
The differential equation for the unravelling $\zeta \mapsto | \Psi_t (\zeta ) \rangle_S$ in Theorem \ref{thm:unravelling1} can alternatively be written in the form
\begin{eqnarray}
\frac{d}{d t} | \Psi_t (\zeta ) \rangle_S  =  \big( -i \hat{H} +
 \hat{L}  \zeta_t \big)   | \Psi_t (\zeta ) \rangle_S 
-  \int_{\mathscr{C}_K ( \mathbb{T} , dt) }\, \xi_t^\ast  \,  \hat{L}^\ast| \Psi _t(\zeta + \xi) \rangle_S \,  \mathbb{P} [ d\xi ] ,
\label{eq:nMSSE2}
\end{eqnarray}
again with the initial condition $| \Psi_0 (\zeta ) \rangle_S =| \phi \rangle_S$ for all $\zeta$.
\end{theorem}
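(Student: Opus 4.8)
The plan is to reduce the statement to a single Gaussian integration-by-parts identity and then substitute it into the equation of Theorem~\ref{thm:unravelling1}. Since $\hat{L}^\ast$ acts only on $\mathfrak{h}_S$, it commutes both with $\int_0^t d\tau\,K(t,\tau)\,\delta/\delta\zeta_\tau$ and with integration over the trajectory space, so, granting Theorem~\ref{thm:unravelling1}, the claim (\ref{eq:nMSSE2}) --- together with its initial condition, which is inherited verbatim --- is equivalent to
\[
\int_0^t d\tau\,K(t,\tau)\,\frac{\delta}{\delta\zeta_\tau}\,|\Psi_t(\zeta)\rangle_S
= \int_{\mathscr{C}_K(\mathbb{T},dt)} \xi_t^\ast\,|\Psi_t(\zeta+\xi)\rangle_S\,\mathbb{P}[d\xi].
\]
Because $|\Psi_t(\zeta)\rangle_S$ is the hybrid representative of $\hat{U}_t|\phi\otimes\mathrm{vac}\rangle$, Proposition~\ref{prop:causal} allows the left-hand side to be written with the integral extended over all of $\mathbb{T}$ (equivalently in the causal/running-horizon form), so I would work with the un-truncated operator $\hat{Z}(t)^\ast$, which by (\ref{eq:Z_star}) is exactly that left-hand side.

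First I would pass to the feature space. In the faithful setting (to which one may always reduce) $\zeta:\mathfrak{f}_B\to\mathscr{C}_K(\mathbb{T},dt)$ is a conjugate-linear bijection, hence additive: if $\zeta=\zeta(f_\zeta)$ and $\xi=\zeta(h)$ then $\zeta+\xi=\zeta(f_\zeta+h)$, and $\xi_t^\ast=\overline{\langle h|g_t\rangle}=\langle g_t|h\rangle$. Using $\mathbb{P}=\tilde{\mathbb{P}}\circ\zeta^{-1}$, the right-hand side becomes $\int_{\mathfrak{f}_B}\langle g_t|h\rangle\,|\tilde{\Psi}_t(f_\zeta+h)\rangle_S\,\tilde{\mathbb{P}}[dh]$, while by (\ref{eq:Z_star}) (and its discrete/continuum variants) the left-hand side is the hybrid Bargmann representative of $(\hat{I}_S\otimes\hat{Z}(t)^\ast)|\Psi_t\rangle=(\hat{I}_S\otimes\hat{a}(g_t))|\Psi_t\rangle$, i.e.\ of $\sum_\omega g_t(\omega)^\ast\,\partial/\partial f(\omega)^\ast$ applied to $\tilde{\Psi}_t$. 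Thus everything reduces to the pointwise identity, valid for bath vectors $|\Psi\rangle$ in a suitable domain,
\[
\sum_\omega g_t(\omega)^\ast\,\frac{\partial}{\partial f(\omega)^\ast}\,\tilde{\Psi}(f)
= \int_{\mathfrak{f}_B}\langle g_t|h\rangle\,\tilde{\Psi}(f+h)\,\tilde{\mathbb{P}}[dh].
\]

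To prove this I would verify it on the total set of exponential vectors and extend by continuity. For $\tilde{\Psi}=e^{\langle\,\cdot\,|v\rangle}$ the left side is $\langle g_t|v\rangle e^{\langle f|v\rangle}$; the right side factorizes as $e^{\langle f|v\rangle}\int_{\mathfrak{f}_B}\langle g_t|h\rangle e^{\langle h|v\rangle}\tilde{\mathbb{P}}[dh]$, and comparing the terms linear in the first argument of the defining relation (\ref{eq:inner}) --- which reads $\int e^{\langle g_1|h\rangle}e^{\langle h|v\rangle}\tilde{\mathbb{P}}[dh]=e^{\langle g_1|v\rangle}$ --- gives $\int_{\mathfrak{f}_B}\langle g_1|h\rangle e^{\langle h|v\rangle}\tilde{\mathbb{P}}[dh]=\langle g_1|v\rangle$; setting $g_1=g_t$ shows the two sides agree. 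Equivalently, one may obtain the integral form directly from the resolution of identity $\int_{\mathfrak{f}_B}|e^h\rangle\langle e^h|\tilde{\mathbb{P}}[dh]=\hat{I}_B$ together with the eigen-relation (\ref{eq:eigen}), $\hat{a}(g_t)|e^h\rangle=\langle g_t|h\rangle|e^h\rangle$, followed by a Cameron--Martin shift $h\mapsto f_\zeta+h$; the underlying fact --- that differentiation along a Cameron--Martin direction is, under the Gaussian $\tilde{\mathbb{P}}$, dual to multiplication by the corresponding linear functional --- is precisely the Gaussian integration-by-parts formula, and it is this duality that produces the factor $\xi_t^\ast$ in (\ref{eq:nMSSE2}). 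Substituting the identity back into the equation of Theorem~\ref{thm:unravelling1} (restoring $\hat{L}^\ast$ and using Fubini to commute it with the $\mathbb{P}$-integral) then yields (\ref{eq:nMSSE2}).

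The main obstacle is the infinite-dimensional bookkeeping rather than anything conceptual: one must (i) work with the genuine $\sigma$-additive extension of $\tilde{\mathbb{P}}$ to $\mathfrak{f}_B^{>}$ and invoke its quasi-invariance under translation by the Cameron--Martin vectors $g_t,f_\zeta\in\mathfrak{f}_B$; (ii) check that $h\mapsto\langle g_t|h\rangle\,|\tilde{\Psi}_t(f_\zeta+h)\rangle_S$ is $\tilde{\mathbb{P}}$-integrable, which follows by Cauchy--Schwarz since linear functionals lie in every $L^p(\tilde{\mathbb{P}})$ and $\tilde{\Psi}_t\in L^2(\tilde{\mathbb{P}})$ with the shift density in every $L^q(\tilde{\mathbb{P}})$; and (iii) justify the passage from exponential vectors to $|\Psi_t(\zeta)\rangle_S$ by continuity of both sides on the exponential domain. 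In the finite-dimensional bath case none of this is needed, and the two-line computation on exponential vectors is already a complete proof.
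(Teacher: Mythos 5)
Your argument is correct, but it follows a different route from the paper's. You take Theorem \ref{thm:unravelling1} as the starting point and reduce everything to the single Bargmann-space identity that the functional-derivative operator in (\ref{eq:Z_star}) coincides with the Gaussian-shift integral $\tilde\Psi\mapsto\int\langle g_t|h\rangle\,\tilde\Psi(f+h)\,\tilde{\mathbb{P}}[dh]$, verified on exponential vectors by linearizing (\ref{eq:inner}) in its first argument. The paper instead never invokes the functional-derivative form at all: it re-derives (\ref{eq:nMSSE2}) directly from the microscopic dynamics by differentiating the shifted unitary $\hat U_t^{(f)}=e^{\hat a(f)}\hat U_t e^{-\hat a(f)}$, inserting the resolution of identity $\int|e^k\rangle\langle e^k|\,\tilde{\mathbb{P}}[dk]=\hat I_B$, and using $\langle\mathrm{vac}|\hat a(g_t)|e^k\rangle=\langle g_t|k\rangle$ together with $\langle e^k|e^{\hat a(f)}=\langle e^{f+k}|$ before changing variables $k\mapsto\xi=\zeta(k)$. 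The algebraic engine is the same (eigen-relation plus Gaussian moment formula), but the two routes buy different things: the paper's derivation is self-contained and sidesteps the only formally justified representation (\ref{eq:Z_star}), whereas yours is shorter given Theorem \ref{thm:unravelling1} and cleanly isolates the duality responsible for the factor $\xi_t^\ast$, at the cost of inheriting whatever imprecision attaches to the functional-derivative calculus and of the domain/continuity issues you flag in extending an unbounded-operator identity off the exponential domain. One small caution: your parenthetical ``Cameron--Martin shift'' description of the alternative route is loose as stated --- a literal translation $h\mapsto f+h$ in the integral obtained from the resolution of identity produces an extra density factor $e^{-\langle h|f\rangle}$, and the equality with $\int\langle g_t|h\rangle\,\tilde\Psi(f+h)\,\tilde{\mathbb{P}}[dh]$ holds only because $\tilde\Psi$ is anti-holomorphic (equivalently, by the reproducing property of the kernel $e^{\langle f|h\rangle}$ in the Segal--Bargmann space), not by quasi-invariance alone; your primary verification on exponential vectors is, however, complete and already settles the finite-dimensional case.
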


\begin{proof}
Let us fix an orthonormal basis $\{ e_n \}$ for $\mathfrak{h}_S$. We take $f \in \mathfrak{f}_B$ to be the one-particle state such that $\zeta_t = \zeta_t (f)$ and we then have that
\begin{eqnarray}
\langle e_n  | \Psi_t (\zeta ) \rangle_S  =  \langle e_n \otimes e^f | \hat{U}_t | \phi \otimes \mathrm{vac}  \rangle =\langle e_n \otimes \mathrm{vac}  | \hat{U}_t^{(f)} | \phi \otimes \mathrm{vac}  \rangle
\end{eqnarray}
where $\hat{U}_t^{(f)} = e^{\hat{a}(f)} \hat{U}_t e^{- \hat{a} (f) } $. We see that $ \frac{d}{dt} \hat{U}_t^{(f)} = \big( -i \hat{\Upsilon}_t + \zeta_t (f) \, \hat{L} \otimes \hat{I}_B \big) \hat{U}_t^{(f)}$, where we recall that $ \zeta_t (f) \equiv \langle f | g_t \rangle$, and so
\begin{eqnarray}
\frac{d}{d t} \langle e_n  | \Psi_t (\zeta ) \rangle_S  =  -i \langle e_n \otimes \mathrm{vac}  | \hat{\Upsilon}_t\, \hat{U}_t^{(f)} | \phi \otimes \mathrm{vac}  \rangle
+\langle e_n   | \hat{L} | \Psi_t (\zeta ) \rangle_S \, \zeta_t .
\end{eqnarray}
It remains to express the first term on the right hand side in a more convenient form. Here we insert a resolution of identity
\begin{eqnarray}
 -i \langle e_n \otimes \mathrm{vac}  | \hat{\Upsilon}_t\, \hat{U}_t^{(f)} | \phi \otimes \mathrm{vac}  \rangle
=-i \sum_m \int_{\mathfrak{f}_B}  \tilde{\mathbb{P} }[ d k] 
\langle e_n \otimes \mathrm{vac}  | \hat{\Upsilon}_t\, | e_m \otimes e^k \rangle
\langle e_m \otimes e^k |\hat{U}_t^{(f)} | \phi \otimes \mathrm{vac}  \rangle
\label{eq:bridge}
\end{eqnarray}
and we see that
\begin{eqnarray}
-i\langle e_n \otimes \mathrm{vac}  | \hat{\Upsilon}_t\, | e_m \otimes e^k \rangle =-i \langle e_n   | \hat{H} | e_m  \rangle_S
- \langle e_n   | \hat{L}^\ast  | e_m  \rangle_S \, \langle g_t | k \rangle .
\end{eqnarray}

The $\hat{H}$-term contribution to (\ref{eq:bridge}) is then
\begin{eqnarray}
-i \sum_m \int_{\mathfrak{f}_B}  \tilde{\mathbb{P} }[ d k] 
\langle e_n  \otimes \mathrm{vac}  | \hat{H} | e_m  \otimes e^k \rangle
\langle e_m \otimes e^k |\hat{U}_t^{(f)} | \phi \otimes \mathrm{vac}  \rangle 
\end{eqnarray}
which reduces to $-i \langle \hat{H} e_n | \tilde{\Psi}_t (f) \rangle_S = -i \langle  e_n | \hat{H}| \Psi_t (\zeta ) \rangle_S$.

The $\hat{L}^\ast$-term contribution to (\ref{eq:bridge}) is then
\begin{eqnarray}
&&
- \sum_m \int_{\mathfrak{f}_B}  \tilde{\mathbb{P} }[ d k] 
\langle e_n   | \hat{L}^\ast  | e_m  \rangle_S  \, \langle g_t | k \rangle
\langle e_m \otimes e^k |\hat{U}_t^{(f)} | \phi \otimes \mathrm{vac}  \rangle \nonumber\\
&&= - \sum_m \int_{\mathfrak{f}_B}  \tilde{\mathbb{P} }[ d k] 
\langle e_n   | \hat{L}^\ast  | e_m  \rangle_S  \, \zeta _t (k )^\ast
\langle e_m \otimes e^{f+k} |\hat{U}_t | \phi \otimes \mathrm{vac} \rangle \nonumber\\
&&= - \sum_m \int_{\mathfrak{f}_B}  \tilde{\mathbb{P} }[ d k] 
\langle e_n   | \hat{L}^\ast  | e_m  \rangle_S  \, \zeta _t (k )^\ast 
\langle e_m  | \tilde{\Psi}_t ( f+k ) \rangle_S \nonumber\\
&&= -  \int_{\mathfrak{f}_B}  \tilde{\mathbb{P} }[ d k]   \zeta _t (k )^\ast 
\langle e_n  | \hat{L}^\ast  \tilde{\Psi}_t ( f+k ) \rangle_S .
\label{eq:last}
\end{eqnarray}
The final step is the change of functional variable from $k \in \mathfrak{f}_B$ to $\xi_t =  \zeta_t (k)$. We note that $\langle f+k |g_t \rangle = \zeta_t (f)+ \zeta_t (k) = \zeta_t + \xi_t$. This results in (\ref{eq:last}) being  $-  \int_{\mathscr{C}_K ( \mathbb{T} , dt)}  \mathbb{P} [ d \xi ]   \xi _t ^\ast 
\langle e_n  | \hat{L}^\ast  \Psi_t ( \zeta + \xi  ) \rangle_S$.

Collecting the terms, and noting that $e_n$ was arbitrary, yields the desired result.
\end{proof}

\subsection{Thermal Baths}
So far we have taken the bath to be in the vacuum state, though the generalization to thermal baths is straightforward and uses the standard Araki-Woods representation\cite{GK}. Here one should replace the bath with two baths $B_1$ and $B_2$ to describe the spontaneous and stimulated effects, respectively. Here one would set
\begin{eqnarray}
\hat{a}_\omega = \sqrt{ n_\omega (T) +1} \, \hat{a}_{\omega,1} \otimes \hat{I}_{B_2} + \sqrt{ n_\omega (T) } \hat{I}_{B_1} \otimes  \hat{a}_{\omega,1}
\end{eqnarray}
where the $[ \hat{a}_{\omega , j} , \hat{a}_{\omega' , j }^\ast ] = \delta_{\omega , \omega '} \delta_{j,j'}$, for $j=1,2$, and $n_\omega (T) = \frac{1}{e^{\omega/kT} -1}$. 
The thermal state then corresponds to the joint vacuum for $B_1$ and $B_2$. 

Set $ g_1 (\omega ) = \sqrt{ n_\omega (T) +1}  \, g( \omega )$ and $ g_2 ( \omega ) = \sqrt{ n_\omega (T)}  \, g (\omega)^\ast $, then from complex amplitudes $f_j ( \omega ) $ we construct complex trajectories $ \zeta _{t,j} $ corresponding to $\sum_{\omega } g _j (\omega ) e^{-i\omega t} f_j ( \omega  )^{\ast } $. The process $\hat{Z} (t)$ is represented as 
\begin{eqnarray}
\hat{Z}_{1} (t) \otimes \hat{I}_{B_2} + \hat{I}_{B_1} \otimes \hat{Z}_{2}(t)^\ast 
\end{eqnarray}
where
$ \hat{Z} _{j} (t)  =\frac{1}{\sqrt{2\pi }}\sum_{\omega ,j }\lambda _{\omega }e^{-i\omega
t} \hat{a} _{\omega ,j}^{\ast } $. 

The appropriate form of the wave representation now involves a pair of complex trajectories $(\zeta_1 , \zeta_2 )$ and the corresponding Schr\"{o}dinger equation will be
\begin{eqnarray}
\frac{\partial}{\partial t} | \Psi_t (\zeta_1 , \zeta_2 ) \rangle_S  =  \bigg( -i\hat{H} +
\hat{L} \, \zeta_{t,1} -\hat{L}^\ast \int_0^t d\tau \,K_1\left( t,\tau \right) \frac{\delta }{\delta \zeta _{\tau , 1} } 
 -\hat{L}^\ast  \zeta_{t,2} + \hat{L} \int_0^t d\tau \,K_2\left( t,\tau \right) \frac{\delta }{\delta \zeta _{\tau , 2} }
\bigg)  | \Psi _t(\zeta_1 , \zeta_2 ) \rangle_S ,
\label{eq:nMSSE_thermal}
\end{eqnarray}
where $K_j$ denote the kernels associated to the $\lambda_j$, $j=1,2$ respectively.

\subsection{The Markovian Case}
\label{sec:markov}
 however this is best handled by the quantum Ito calculus. Here we should describe proceedings using the Hudson-Parthasarathy quantum stochastic differential equation (QSDE)
\begin{eqnarray}
d \hat{U}_t = \bigg( (-i \hat{H} - \frac{\gamma}{2} \, \hat{L}^\ast \hat{L} ) \otimes dt + \sqrt{\gamma} \, \hat{L} \otimes d \hat{A}^\ast_t - \sqrt{\gamma}\hat{L}^\ast \otimes d \hat{A}_t \bigg) \hat{U}_t ,
\end{eqnarray}
where $\hat{A}_t^\ast$ and $\hat{A}_t$ are creation and annihilation processes and the non-trivial component of the quantum Ito table is $ d\hat{A}_t \, d\hat{A}_t^\ast = dt$. as is well-known, the QSDE admits a unique adapted unitary solution for $\hat{H}=\hat{H}^\ast$ and $\hat{L}$ bounded and $\hat{U}_0 = \hat{I}$. This time, note that the bath is described in terms of the Fock space with one-particle space $L^2 (\mathbb{R}_+ , dt )$.

As before, we set
\begin{eqnarray}
\langle \psi | \Psi_t (\zeta ) \rangle_S = \langle \psi \otimes e^f| \hat{U}_t | \phi \otimes \mathrm{vac}  \rangle
\end{eqnarray}
and we readily obtain
\begin{eqnarray}
\frac{d}{dt} | \Psi_t (\zeta ) \rangle_S = \bigg( -i \hat{H} - \frac{\gamma}{2} \hat{L}^\ast \hat{L}  + \hat{L} \zeta_t  \bigg) | \Psi_t (\zeta ) \rangle_S ,
\label{eq:Markov}
\end{eqnarray}
with $ \zeta_t = \sqrt{\gamma} f(t)$. 

%%%%%%%%%%%%%%%%%%%%%%%%%%%%%%%%%%%%%%%%%%%%%%%%%%%%%%%%%%%%%%%%%%%%%%%%%%%%%%%%%%%%%%%%%%%%%%%%%%%%%%%%%%%%%%%%%%%
%%%%%%%%%%%%%%%%%%%%%%%%%%%%%%%%%%%%%%%%%%%%%%%%%%%%%%%%%%%%%%%%%%%%%%%%%%%%%%%%%%%%%%%%%%%%%%%%%%%%%%%%%%%%%%%%%%%
%%%%%%%%%%%%%%%%%%%%%%%%%%%%%%%%%%%%%%%%%%%%%%%%%%%%%%%%%%%%%%%%%%%%%%%%%%%%%%%%%%%%%%%%%%%%%%%%%%%%%%%%%%%%%%%%%%%
%%%%%%%%%%%%%%%%%%%%%%%%%%%%%%%%%%%%%%%%%%%%%%%%%%%%%%%%%%%%%%%%%%%%%%%%%%%%%%%%%%%%%%%%%%%%%%%%%%%%%%%%%%%%%%%%%%%

\section{Exact Solution: The Jaynes-Cummings Model}
\label{sec:Exact}
We shall now give an exact solution for the state $| \Psi_t (\zeta ) \rangle_S$ for the Jaynes-Cummings model which is the Hamiltonian model (\ref{eq:Ham_SB}) where our system is a two-level atom with Hilbert space $\mathfrak{h}_S \cong \mathbb{C}^2$ spanned by a ground state $| \mathtt{g} \rangle_S$ and an excited state $| \mathtt{e} \rangle_S$, respectively. In this case, we take the system Hamiltonian to be $\hat{H}= \omega_0 | \mathtt{e}\rangle \langle \mathtt{e} |$ and the coupling operator to be the lower operator $\hat{L} = |\mathtt{g} \rangle \langle \mathtt{e}|$.

In this case, we move to the interaction picture with respect to the unperturbed Hamiltonian $\hat{H}_0 = \hat{H} \otimes \hat{I}_B + \hat{I}_S \otimes \hat{H}_B$. In this case, the interaction picture is described by
\begin{eqnarray}
- i \hat{\Upsilon}_t = |\mathtt{g} \rangle \langle \mathtt{e} | \otimes a (g_t)^\ast 
-  |\mathtt{e} \rangle \langle \mathtt{g} | \otimes a (g_t) ,
\end{eqnarray}
where this time $g_t = e^{it (\hat{h}_B - \omega_0) } g$. The unitary $\hat{U}_t$ admits the formal Dyson series expansion
\begin{eqnarray}
\hat{U}_t = \hat{I} + \sum_{n=1}^\infty (-i)^n
\int_{t \ge \tau_n > \tau_{n-1} > \cdots >\tau_1 \ge0} d\tau_n \cdots d\tau_1 \, \hat{\Upsilon}_{\tau_n} \cdots \hat{\Upsilon}_{\tau_1 } .
\end{eqnarray}

\begin{theorem}
The state $| \Psi_t (\zeta ) \rangle_S$ for the Jaynes-Cummings model with initial system state $| \phi \rangle_S$ is given by
\begin{eqnarray}
| \Psi_t (\zeta ) \rangle_S
= | \phi \rangle_S + \langle \mathtt{e} | \phi \rangle_S \bigg( (\lambda (t) -1) | \mathtt{e} \rangle_S
+\int_0^t \zeta_\tau \lambda (\tau) \, d\tau \, | \mathtt{g} \rangle_S \bigg) ,
\label{eq:JC_equation}
\end{eqnarray}
where $\lambda (t) = \sum_{k=0}^\infty (-1)^k I_k (t)$ with $I_0 (t)= 1$ and, for $k \ge 1$,
\begin{eqnarray}
I_k (t) = \int_{t \ge \tau_{2k} > \tau_{2k-1} > \cdots >\tau_1 \ge0} d\tau_{2k} \cdots d\tau_1 \,
K( \tau_{2k} , \tau_{2k-1} ) \cdots K (\tau_2 , \tau_1 ) .
\label{eq:I-terms}
\end{eqnarray}
\end{theorem}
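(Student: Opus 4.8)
The plan is to use that the Jaynes--Cummings interaction conserves the total excitation number, solve the Schr\"odinger equation for $\hat U_t$ on the one-excitation sector by an ansatz, and only at the very end pass to the Bargmann representation. First I would note that $-i\hat\Upsilon_t = |\mathtt{g}\rangle\langle\mathtt{e}|\otimes\hat a(g_t)^\ast - |\mathtt{e}\rangle\langle\mathtt{g}|\otimes\hat a(g_t)$ commutes with the excitation number $\hat N = |\mathtt{e}\rangle\langle\mathtt{e}|\otimes\hat I_B + \hat I_S\otimes\hat N_B$, where $\hat N_B = d\Gamma(\hat I_{\mathfrak f_B})$, so $\hat U_t$ preserves each $\hat N$-eigenspace. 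Writing $|\phi\rangle_S = \langle\mathtt{g}|\phi\rangle\,|\mathtt{g}\rangle_S + \langle\mathtt{e}|\phi\rangle\,|\mathtt{e}\rangle_S$ and acting on $|\phi\rangle_S\otimes|\mathrm{vac}\rangle_B$: the piece $|\mathtt{g}\rangle\otimes|\mathrm{vac}\rangle$ lies in the zero-excitation sector, which $\hat\Upsilon_t$ annihilates, so $\hat U_t$ fixes it; the piece $|\mathtt{e}\rangle\otimes|\mathrm{vac}\rangle$ stays in the one-excitation sector, for which I take the ansatz
\[
\hat U_t\,\bigl(|\mathtt{e}\rangle_S\otimes|\mathrm{vac}\rangle_B\bigr) = \lambda(t)\,|\mathtt{e}\rangle_S\otimes|\mathrm{vac}\rangle_B + |\mathtt{g}\rangle_S\otimes|\psi_t\rangle_B ,
\]
with $|\psi_t\rangle_B$ a one-particle vector (identified with an element of $\mathfrak f_B$) and $\lambda(0)=1$, $|\psi_0\rangle_B=0$.

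Substituting this ansatz into $\frac{d}{dt}\hat U_t = -i\hat\Upsilon_t\hat U_t$ and using $\hat a(g_t)^\ast|\mathrm{vac}\rangle_B = |g_t\rangle_B$ together with $\hat a(g_t)|\psi_t\rangle_B = \langle g_t|\psi_t\rangle_{\mathfrak f_B}\,|\mathrm{vac}\rangle_B$ yields the coupled system $\dot\lambda(t) = -\langle g_t|\psi_t\rangle_{\mathfrak f_B}$ and $|\dot\psi_t\rangle_B = \lambda(t)\,|g_t\rangle_B$. Eliminating $|\psi_t\rangle_B = \int_0^t\lambda(\tau)\,|g_\tau\rangle_B\,d\tau$ leaves the scalar Volterra integro-differential equation $\dot\lambda(t) = -\int_0^t K(t,\tau)\lambda(\tau)\,d\tau$, $\lambda(0)=1$, where $K(t,\tau)=\langle g_t|g_\tau\rangle_{\mathfrak f_B}$ as in (\ref{eq:K}). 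Its Picard/Neumann iteration generates precisely $\lambda(t) = \sum_{k\ge 0}(-1)^k I_k(t)$ with the $I_k$ of (\ref{eq:I-terms}); since $|K(t,\tau)|\le\|g\|_{\mathfrak f_B}^2$ one has $|I_k(t)|\le\|g\|_{\mathfrak f_B}^{2k}\,t^{2k}/(2k)!$, so the series converges absolutely and uniformly on compact time intervals and term-by-term differentiation is legitimate. Equivalently, one can read off the same expansion from the Dyson series: each factor $-i\hat\Upsilon$ toggles between $|\mathtt{e}\rangle_S\otimes|\mathrm{vac}\rangle_B$ and the one-particle sector over $|\mathtt{g}\rangle_S$, so only even orders $2k$ feed back into the $|\mathtt{e}\rangle_S$ amplitude, each contributing $(-1)^k K(\tau_{2k},\tau_{2k-1})\cdots K(\tau_2,\tau_1)$ integrated over the ordered simplex.

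Finally I would transfer to the hybrid complex-wave picture. By definition $\langle\psi|\Psi_t(\zeta)\rangle_S = \langle\psi\otimes e^f|\hat U_t|\phi\otimes\mathrm{vac}\rangle$ with $\zeta_\tau=\langle f|g_\tau\rangle_{\mathfrak f_B}$, i.e.\ $|\Psi_t(\zeta)\rangle_S$ is the partial pairing $\langle e^f|_B\,\hat U_t\bigl(|\phi\rangle_S\otimes|\mathrm{vac}\rangle_B\bigr)$. Using $\langle e^f|\mathrm{vac}\rangle_B = 1$ and that the pairing of $\langle e^f|$ with a one-particle vector $|h\rangle_B$ equals $\langle f|h\rangle_{\mathfrak f_B}$, I obtain $\langle e^f|\psi_t\rangle_B = \int_0^t\lambda(\tau)\,\zeta_\tau\,d\tau$, and reassembling the $|\mathtt{g}\rangle_S$ and $|\mathtt{e}\rangle_S$ contributions gives
\[
|\Psi_t(\zeta)\rangle_S = |\phi\rangle_S + \langle\mathtt{e}|\phi\rangle_S\Bigl((\lambda(t)-1)\,|\mathtt{e}\rangle_S + \int_0^t\zeta_\tau\lambda(\tau)\,d\tau\,|\mathtt{g}\rangle_S\Bigr),
\]
which is (\ref{eq:JC_equation}). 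The structural content is straightforward; I expect the only point needing genuine care to be the analytic bookkeeping --- justifying the interchange of the antilinear pairing $\langle e^f|\cdot\rangle$ with the infinite Dyson sum and with the Gel'fand-Pettis integrals defining $|\psi_t\rangle_B$ --- which is handled by the uniform bound on the $I_k$ above and the boundedness of $\hat H$ and $\hat L$.
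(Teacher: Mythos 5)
Your proposal is correct, and it reaches the result by a genuinely different route from the paper. The paper works perturbatively: it expands $\hat U_t$ in the Dyson series, uses $\hat L^2=\hat L^{\ast 2}=0$ to reduce each term $\hat\Upsilon_{\tau_n}\cdots\hat\Upsilon_{\tau_1}|\phi\otimes\mathrm{vac}\rangle$, evaluates the matrix elements $\langle\psi\otimes e^f|\cdots|\phi\otimes\mathrm{vac}\rangle$ order by order (even orders contributing $(-1)^k K(\tau_{2k},\tau_{2k-1})\cdots K(\tau_2,\tau_1)$ to the $|\mathtt{e}\rangle_S$ amplitude, odd orders an extra factor $\zeta_{\tau_{2k+1}}$ to the $|\mathtt{g}\rangle_S$ amplitude), and then sums the series to identify $\lambda(t)$ and the $|\mathtt{g}\rangle_S$ integral. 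You instead exploit conservation of the total excitation number to reduce the problem to the one-excitation sector, parametrize the evolved vector exactly (your ``ansatz'' is in fact the general element of that sector, so nothing is lost), obtain the closed Volterra equation $\dot\lambda(t)=-\int_0^t K(t,\tau)\lambda(\tau)\,d\tau$ together with $|\psi_t\rangle=\int_0^t\lambda(\tau)|g_\tau\rangle\,d\tau$, and only then pair with $\langle e^f|$ to land in the hybrid complex-wave picture. What your route buys: the identity (\ref{eq:lambda_diff}), which the paper only records afterwards as a remark, appears as the primary dynamical equation, the series $\sum_k(-1)^k I_k(t)$ arises as its Neumann expansion with the convergence bound $|I_k(t)|\le\|g\|^{2k}t^{2k}/(2k)!$ you give, and the structural reason the solution has only the two components $(\lambda(t),\psi_t)$ is made transparent. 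What the paper's route buys: it never needs the excitation-conservation observation (valid here because the interaction-picture generator $\hat\Upsilon_t$ contains only $\hat L\otimes\hat a(g_t)^\ast$ and $\hat L^\ast\otimes\hat a(g_t)$), and it exhibits explicitly how each order of the Dyson expansion produces the kernel products and the single factor of $\zeta$, which is the computation the theorem's statement is built around. Your second paragraph's Dyson-series aside shows the two arguments are really two bookkeepings of the same expansion, and your final transfer step $\langle e^f|\psi_t\rangle_B=\int_0^t\lambda(\tau)\zeta_\tau\,d\tau$ matches the paper's identification of the $|\mathtt{g}\rangle_S$ coefficient.
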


\begin{proof}
For a given state vector $| \psi \rangle \in \mathfrak{h}_S$ and for $\zeta_t = \zeta_t (f)$, we have that $ \langle \psi | \Psi_t (\zeta ) \rangle_S = \langle \psi \otimes e^f | \hat{U}_t | \phi \otimes \mathrm{vac} \rangle$. 

When computing the terms appearing in the Dyson series, we find that $\hat{\Upsilon}_{\tau_n} \cdots \hat{\Upsilon}_{\tau_1 }$ will be greatly simplified by noting that $\hat{L}^2 = \hat{L}^{*2}=0$.
In fact, we have that
\begin{eqnarray}
\cdots \hat{\Upsilon}_{\tau_4}  \hat{\Upsilon}_{\tau_3 } \hat{\Upsilon}_{\tau_2}  \hat{\Upsilon}_{\tau_1 } | \phi \otimes \mathrm{vac} \rangle &\equiv &
\cdots \hat{L}^\ast \hat{L} \hat{L}^\ast \hat{L} | \phi \rangle_S \otimes  \cdots a ( g_{\tau_4} )a ( g_{\tau_3} )^\ast a ( g_{\tau_2} )a ( g_{\tau_1} )^\ast 
 |  \mathrm{vac} \rangle_B .
\end{eqnarray}
Here $(\hat{L}^\ast \hat{L})^k = | \mathtt{e} \rangle \langle \mathtt{e}|$ for each integer $k >0$.

The development of the Dyson series will begin as follows:
\begin{eqnarray}
\langle \psi \otimes e^f | \hat{U}_t | \phi \otimes \mathrm{vac} \rangle &=& \langle \psi  | \phi  \rangle_S +\langle \psi  | \mathtt{g}  \rangle_S\langle \mathtt{e}  | \phi  \rangle_S \int_0^t d\tau_1 \, \langle f | g_{\tau_1}t \rangle  -\langle \psi  | \mathtt{e}  \rangle_S\langle \mathtt{e}  | \phi  \rangle_S \int_0^t d\tau_2 \int_0^{\tau_2} d\tau_1 \, \langle g_{\tau_2} | g_{\tau_1} \rangle + \cdots .
\end{eqnarray}
The general term is easily deduced.

For $n=2k$ (even), we have
\begin{eqnarray}
&& \langle \psi \otimes e^f | \hat{\Upsilon}_{\tau_{2k} } \cdots \hat{\Upsilon}_{\tau_1 }| \phi \otimes \mathrm{vac} \rangle
\nonumber \\ 
&& \qquad = (-1)^k \langle \psi | \mathtt{e} \rangle_S \langle \mathtt{e} | \phi \rangle_S 
\, \langle e^f | a(g_{\tau_{2k}} ) a (g_{\tau{2k-1}})^\ast \cdots
a (g_{\tau_2} ) a (g_{\tau_1} )^\ast | \mathrm{vac} \rangle_B \nonumber \\ 
&& \qquad = (-1)^k \langle \psi | \mathtt{e} \rangle_S \langle \mathtt{e} | \phi \rangle_S 
\, K ( \tau_{2k} , \tau_{2k-1} )  \cdots K  ( \tau_2 ,  \tau_1) ,
\end{eqnarray}
while for $n=2k+1$ (odd)
\begin{eqnarray}
&& \langle \psi \otimes e^f | \hat{\Upsilon}_{\tau_{2k+1} } \cdots \hat{\Upsilon}_{\tau_1 }| \phi \otimes \mathrm{vac} \rangle
\nonumber \\ 
&&  \qquad = (-1)^k \langle \psi | \mathtt{g} \rangle_S \langle \mathtt{e} | \phi \rangle_S 
\, \langle e^f |  a(g_{\tau_{2k+1}} )^\ast a(g_{\tau_{2k}} )\cdots
a (g_{\tau_2} ) a (g_{\tau_1} )^\ast | \mathrm{vac} \rangle_B \nonumber \\ 
&&  \qquad = (-1)^k \langle \psi | \mathtt{g} \rangle_S \langle \mathtt{e} | \phi \rangle_S 
\, \zeta_{\tau_{2k+1}} \, K ( \tau_{2k} , \tau_{2k-1} )  \cdots K  ( \tau_2 ,  \tau_1) .
\end{eqnarray}
Summing the series yields
\begin{eqnarray}
\langle \psi \otimes e^f | \hat{U}_t | \phi \otimes \mathrm{vac} \rangle
= \langle \psi | \phi \rangle_S + \langle \mathtt{e} | \phi \rangle_S \big( \lambda (t) \langle \psi | \mathtt{e} \rangle_S
+\int_0^t \zeta_\tau \lambda (\tau) \, d\tau \, \langle \psi | \mathtt{g} \rangle_S \big) 
\end{eqnarray}
and this gives the result.
\end{proof}

\begin{remark}
We note that we may write $ | \Psi _t (\zeta ) \rangle_S = \hat{\mathtt{U}}_t (\zeta ) \,  | \phi \rangle_S$ where
\begin{eqnarray}
\hat{\mathtt{U}}_t ( \zeta ) = | \mathtt{g} \rangle \langle \mathtt{g} |  +  \lambda (t) \,| \mathtt{e} \rangle \langle \mathtt{e} |
+ \int_0^t \zeta_\tau   \lambda (\tau) d \tau \,| \mathtt{g} \rangle \langle \mathtt{e} | .
\end{eqnarray}
\end{remark}

\begin{remark}
We note that the integrals (\ref{eq:I-terms}) satisfy $ \dot I_{k+1} (t) = \int_0^t K(t, \tau ) I_k ( \tau ) \, d\tau$ and consequently
\begin{eqnarray}
\dot \lambda (t) = - \int_0^t K(t, \tau ) \lambda (\tau) \, d \tau. 
\label{eq:lambda_diff}
\end{eqnarray}
\end{remark}

\begin{remark}
It is worth noting that the solution (\ref{eq:JC_equation}) is indeed a solution to (\ref{eq:nMSSE}) from Theorem \ref{thm:unravelling1} when specified to the Jaynes-Cummings model. Indeed, (\ref{eq:nMSSE}) for this problem reads as
\begin{eqnarray}
\frac{d}{d t} | \Psi_t (\zeta ) \rangle_S  =    \zeta_t \langle \mathtt{e} | \Psi_t (\zeta ) \rangle_S \, | \mathtt{g} \rangle_S
 - \int_{-\infty}^\infty d\tau \,K\left( t,\tau \right) \langle \mathtt{g} |\frac{\delta }{\delta \zeta _\tau }  \Psi _t(\zeta ) \rangle_S \, | \mathtt{e} \rangle_S .
\end{eqnarray}
We likewise see that (\ref{eq:JC_equation}) satisfies
\begin{eqnarray}
\frac{d}{d t} | \Psi_t (\zeta ) \rangle_S  &=&    \zeta_t \langle \mathtt{e} | \Psi_t (\zeta ) \rangle_S \, | \mathtt{g} \rangle_S
 + \dot \lambda (t) \langle \mathtt{e} | \phi  \rangle_S \, | \mathtt{e} \rangle_S \nonumber \\
&=&  \zeta_t \langle \mathtt{e} | \Psi_t (\zeta ) \rangle_S \, | \mathtt{g} \rangle_S
- \int_0^t K(t, \tau ) \lambda (\tau) \, d \tau \, \langle \mathtt{e} | \phi  \rangle_S \, | \mathtt{e} \rangle_S ,
\end{eqnarray}
where we used identity (\ref{eq:lambda_diff}) in the last line. Comparison of the two differential equations shows that the coincide if
\begin{eqnarray}
\langle \mathtt{g} |\frac{\delta }{\delta \zeta _\tau }  \Psi _t(\zeta ) \rangle_S = 
\lambda (\tau) \, \langle \mathtt{e} | \phi  \rangle_S 
.
\end{eqnarray}
Again, for the solution (\ref{eq:nMSSE}) we find that 
\begin{eqnarray}
\langle \mathtt{g} |\frac{\delta }{\delta \zeta _\tau }  \Psi _t(\zeta ) \rangle_S=
\frac{\delta }{\delta \zeta _\tau } \bigg( \int_0^t \zeta_t \lambda (\tau ) d\tau \,  \bigg) \langle \mathtt{e} | \phi  \rangle_S 
=\bigg( \int_0^t     \delta (t- \tau )\lambda (\tau )  \, d \tau  \bigg)    \langle \mathtt{e} | \phi  \rangle_S 
=
\lambda (t) \, \langle \mathtt{e} | \phi  \rangle_S .
\end{eqnarray}
\end{remark}

\begin{remark}
The Markovian case corresponds to $K(t,s) = \gamma \, \delta (t-s)$. Here we find that the integrals (\ref{eq:I-terms}) reduce to $k$-simplicial integals and yield $I_k (t) = \frac{1}{k!} (\gamma /2 )^k $, so that $ \lambda (t) = e^{-\gamma t /2}$ giving an exponential relaxation.

It is instructive to see that this solution is indeed a solution of Markovian equation (\ref{eq:Markov}) for the Jaynes-Cumming model: here $\mathfrak{h}_S =\mathbb{C}^2$ and we set $\hat{H} =0$ and $\hat{L} = | \mathtt{g} \rangle \langle \mathtt{e} |$. The markovian complex trajectory equation is then 
\begin{eqnarray}
\frac{d}{dt} | \Psi_t (\zeta ) \rangle_S = \bigg(  - \frac{\gamma}{2} | \mathtt{e} \rangle \langle \mathtt{e} |  +  \zeta_t | \mathtt{g} \rangle \langle \mathtt{e} | \bigg) | \Psi_t (\zeta ) \rangle_S 
\end{eqnarray}
and setting
\begin{eqnarray}
| \Psi_t (\zeta ) \rangle_S =
\left[
\begin{array}{c}
c_e (t) \\
c_g (t) 
\end{array}
\right] ,
\end{eqnarray}
we find $ \dot{c}_e (t) = \frac{\gamma}{2} \, c_e (t)$ and $\dot{c}_g (t) =\zeta_t \, c_e (t)$
with solutions $c_e (t) = c_e (0) e^{-\gamma t /2}$ and $c_g (t) = c_g (0) + c_e (0) \int_0^t \zeta_\tau e^{-\gamma \tau /2} d\tau$ with 
$ c_e (0) = \langle \mathtt{e} | \phi \rangle_S $ and $ c_g (0) = \langle \mathtt{g} | \phi \rangle_S $. This solution coincides with that given by (\ref{eq:JC_equation}) when we take the exponential relaxation $\lambda (t) = e^{- \gamma t /2}$. 
\end{remark}

\section{Can We Observe Complex Trajectories?}
\label{sec:observe}
Quantum filtering theory is based on the requirement that we measure a commuting family of observable processes $\{ \hat{Y}_k (\cdot ) :  k=1, \cdots, n\}$ parameterized by time. Commutativity is essential because otherwise the most recent measurement demolishes previous ones: this is referred to as the \textit{non-self-demolition property} of the observations. Likewise, an observable can be estimated in terms of these measurements only if it commutes with them - this is the \textit{non-demolition principle}. 

At this stage it is useful to recall the basics of von Neumann algebras. A collection of operators is said to form an algebra if it is closed under addition, scalar multiplication and operator multiplication. It is a *-algebra if, whenever $\hat{X}$ is in the algebra,  $\hat{X}^\ast $ is as well. The commutant of an algebra $\mathfrak{A}$ which is a subalgebra of a larger algebra $\mathfrak{B}$ is the set of all elements in $\mathfrak{B}$ which commute with every element of $\mathfrak{A}$: we denote the commutant by $\mathfrak{A}'$. A von Neumann algebra is then as *-algebra which is its own bicommutant, that is, $\mathfrak{A}''=(\mathfrak{A}')'= \mathfrak{A}$. Von Neumann algebras play the role of $\sigma$-algebras in noncommutative measure theory. Given a collection of operators $\mathfrak{A}_0$, we can construct the smallest von Neummann algebra containing all the elements of the collection - this will be $\mathfrak{A_0}''$ and we refer to this as the von Neumann algebra generated by $\mathfrak{A}_0$.

We may now formulate the quantum filtering problem. The von Neumann algebra $\mathfrak{Y}_t$ generated by the measured observables $\{ \hat{Y}_k (\tau ) :   0 \le \tau \le t, k=1, \cdots, n\}$ will be commutative, by virtue of the self-nondemolition property. 
We say that the measured processes are \textit{essentially classical stochastic processes} meaning that if we fix a state and restrict to just these processes then the classical theory of stochastic processes is sufficient to describe proceedings. 

The nondemolition property then tells us that we should restrict our interests at each time $t$ to those observables that lie in the commutant of measurement algebra up to that times, $\mathfrak{Y}_t'$. If we fix an overall state, then we can construct the conditional expectation from $\mathfrak{Y}_t'$ onto $\mathfrak{Y}_t$. In the case of markovian models, $\mathfrak{Y}_t'$ consists of all observables of the system (in the Heisenberg picture) at time $t$ or later. In such cases, the conditional expectation for an observable $\hat{A}$ of the system ($\hat{A}_t$ in the Heisenberg picture) is denoted by $\pi_t (\hat{A})$ and must, of course, belong to $\mathfrak{Y}_t$. 

As $\mathfrak{Y}_t$ is commutative, we may view it as a classical algebra - effectively treating the observed processes as standard classical stochastic processes. We may construct a conditioned state $| \Psi_t \rangle_S$ (a random variable depending on the observations up to time $t$) such that $\langle \Psi_0 | \pi_t (\hat{A}) | \Psi_0 \rangle_S = 
\langle \Psi_t | \hat{A} | \Psi_t \rangle_S$. In general $| \Psi_t \rangle_S$ satisfies a nonlinear stochastic differential equation but we may also write
\begin{eqnarray}
\langle \chi_0 | \pi_t (\hat{A}) | \chi_0 \rangle_S =
\frac{\langle \chi_t | \hat{A} | \chi _t \rangle_S }{\langle \chi_t | \chi_t \rangle_S }  
\end{eqnarray}
where $| \chi_t \rangle_S $ is not normalized but satisfies a linear stochastic differential equation (the quantum Zakai equation). A typical Zakai equation takes the form
\begin{eqnarray}
d | \chi_t \rangle_S = -( \frac{1}{2}\sum_j \hat{L}_j^\ast \hat{L}_j + i \hat{H} ) \, | \chi_t \rangle_S  \, dt +  \sum_j \hat{L}_j | \chi_t \rangle_S \, dY_j (t)
\end{eqnarray}
where the $\hat{L}_j$ are coupling operators.

\bigskip

Returning to the complex trajectory problem, the first issue we encounter is that the process $\hat{Z} (t)$ not self-adjoint. In principle, we may decomposed $\hat{Z} (t) =\hat{X} (t) + i \hat{Y} (t)$ where $\hat{X}$ and $\hat{Y}$ are self-adjoint processes which we may refer to as the quadrature processes. We note the following commutation relations
\begin{gather}
[ \hat{X}(t)  ,\hat{X} (s) ] = [\hat{Y} (t),\hat{Y} (s)  ] = \frac{i}{2} \, \textrm{Im} \, K(t,s) , \nonumber \\
  [ \hat{X} (t)  , \hat{Y} (s) ] = \frac{1}{2i} \, \textrm{Re} \, K(t,s).
\end{gather}

We may construct the algebra $\mathfrak{Z}_t$ of functions of the operators $\{ \hat{Z} ( \tau ) :  0 \le \tau \le t \}$ and this will indeed be a commutative algebra. However, it is not a *-algebra and in particular will not be a von Neumann algebra. In fact, we can construct the adjoint algebra $\mathfrak{Z}_t^\ast $ and this does not coincide with $\mathfrak{Z}_t$. Indeed, $\mathfrak{Z}_t \cap \mathfrak{Z}_t^\ast$ consists of just multiples of the identity.

The von Neumann algebra generated by $\mathfrak{Z}_t$ is $\mathfrak{W}_t=( \mathfrak{Z}_t \cup \mathfrak{Z}_t^\ast )''$ and this will be noncommutative since the process $Z_t$ does not commute with its adjoint. We may form the von Neumann algebras $\mathfrak{X}_t$ and $\mathfrak{Y}_t$ generated by the quadrature processes but, again, these two processes do not commute. Worse still, neither of the quadrature algebras are commutative - both are \textit{self-demolishing!} This means that we cannot even do filtering on just one of these quadratures. (Exceptionally, the quadrature algebras will be commutative if the kernel was real-valued. This would require that for each frequency $\omega \in \Omega$ we also have $-\omega  \in \Omega$ with $| g(\omega )|= | g (-\omega ) |$.)

\bigskip

In short, $\hat{Z}_t$ is not self-adjoint and therefore not observable!

\begin{acknowledgments}
This work is supported by the ANR project “Estimation et controle des syst\'{e}mes quantiques ouverts” QCOAST Projet ANR-19-CE48-0003, the ANR project QUACO ANR-17-CE40-0007, and the ANR project IGNITION ANR-21-CE47-0015.
\end{acknowledgments}
%%%%%%%%%%%%%%%%%%%%%%%%%%%%%%%%%%%%%%%%%%%%%%%%%%%%%%%%%%%%%%%%%%%%%%%%%%%%%%%%%%%%%%%%%%%%%%%%%%%%%%%%%%%%%%%%%%%
%%%%%%%%%%%%%%%%%%%%%%%%%%%%%%%%%%%%%%%%%%%%%%%%%%%%%%%%%%%%%%%%%%%%%%%%%%%%%%%%%%%%%%%%%%%%%%%%%%%%%%%%%%%%%%%%%%%
%%%%%%%%%%%%%%%%%%%%%%%%%%%%%%%%%%%%%%%%%%%%%%%%%%%%%%%%%%%%%%%%%%%%%%%%%%%%%%%%%%%%%%%%%%%%%%%%%%%%%%%%%%%%%%%%%%%
%%%%%%%%%%%%%%%%%%%%%%%%%%%%%%%%%%%%%%%%%%%%%%%%%%%%%%%%%%%%%%%%%%%%%%%%%%%%%%%%%%%%%%%%%%%%%%%%%%%%%%%%%%%%%%%%%%%

\end{document}